\newcommand{\tuple}[1]{\left\langle #1 \right\rangle}     % n-tuples
\newcommand{\set}[1]{\left\{ #1 \right\}}                 % Sets
\newcommand{\true}{\textit{true}}
\newcommand{\false}{\textit{false}}
\newcommand{\defeq}{\triangleq}
\newcommand{\floor}[1]{\left\lfloor#1\right\rfloor}
\newcommand{\conv}[1]{\textit{conv}\left(#1\right)}
\newcommand{\Qplus}{\mathbb{Q}^{\geq 0}}
\newcommand{\qideal}[1]{\left\langle #1 \right\rangle}
\newcommand{\cone}[1]{\textit{cone}(#1)}
\newcommand{\algpoly}{\textit{alg.polyhedron}\,}
\newcommand{\regcone}{\textit{alg.cone}\,}
\newcommand{\ThZ}{\mathbf{LIRR}}
\newcommand{\lirrmodels}{\models_{\ThZ}}
\newcommand{\CnX}[2]{\mathbf{C}_{#2}\!\left(#1\right)}
\newcommand{\LIA}{\textbf{LIA}}
\newcommand{\PRF}{\mathbf{PRF}}
\newcommand{\QPRF}{\mathbf{QPRF}}
\newcommand{\WLPRF}{\mathbf{WLPRF}}
\newcommand{\Int}{\textit{Int}\,}
\newcommand{\algorithmfont}[1]{\ensuremath{\texttt{#1}}\,}
\definecolor{black}{RGB}{0,0,0}
\definecolor{green}{RGB}{25,160,80}
\definecolor{red}{RGB}{192,57,43}
\definecolor{blue}{RGB}{41,128,185}
\definecolor{orange}{RGB}{255,99,0}
\definecolor{gray}{RGB}{75,75,75}
\definecolor{lightgrey}{RGB}{240,240,240}
\definecolor{purple}{RGB}{155, 89, 182}
\definecolor{darkgrey}{RGB}{52, 73, 94}
\lstdefinestyle{base}{
language=C,
emptylines=1,
breaklines=true,
basicstyle=\color{black}\fontfamily{pzc}\selectfont\tt,
commentstyle=\color{gray}\rm\itshape,
keywordstyle=\rm\bfseries,
identifierstyle=\rm\itshape,
escapechar=|,
morekeywords=[1]{then,do},
morekeywords=[2]{assert},
morekeywords=[3]{requires,ensures},
keywordstyle	= [2]\color{red}\bf,
keywordstyle	= [3]\bf\color{gray},
numberstyle=\footnotesize\color{gray},
moredelim=**[is][\bf\color{green}]{@!}{!@},
moredelim=**[is][\bf\color{red}]{@?}{?@},
literate=
	{<=}{{$\leq$}}1
{>=}{{$\geq$}}1
{!}{{$\neg$}}1
{!=}{{$\neq$}}1
{||}{{$\lor$}}1
{&&}{{$\land$}}1
{->}{{$\rightarrow$}}1
{_1}{$_{1}$}2
{_2}{$_{2}$}2
{_3}{$_{3}$}2
}
\DeclareFontFamily{U}  {MnSymbolC}{}
\DeclareFontShape{U}{MnSymbolC}{m}{n}{
	<-6>  MnSymbolC5
	<6-7>  MnSymbolC6
	<7-8>  MnSymbolC7
	<8-9>  MnSymbolC8
	<9-10> MnSymbolC9
	<10-12> MnSymbolC10
	<12->   MnSymbolC12}{}
\DeclareSymbolFont{MnSyC}{U}{MnSymbolC}{m}{n}
\DeclareMathSymbol{\righthalfcup}{\mathrel}{MnSyC}{184}
\newcommand{\romanqed}{$\righthalfcup$}
\newenvironment{mexample}[1][]{%
	\begin{example}%
		\pushQED{\let\qedsymbol\romanqed\qed}%
		}{%
		\popQED%
	\end{example}%
}
\newcommand{\romanqedhere}{\let\qedsymbol\romanqed\qedhere}
\newenvironment{mproofEnd}[1][]{%
   \begin{proofEnd}%
 %  \pushQED{\let\qedsymbol\proofqed\qed}%
 }{%
 %\popQED%
 \end{proofEnd}%
}
\begin{document}
\title{Breaking the Mold: Nonlinear Ranking Function Synthesis without Templates}
\titlerunning{Nonlinear Ranking Function Synthesis}
% If the paper title is too long for the running head, you can set
% an abbreviated paper title here
%
% \author{First Author\inst{1}\orcidID{0000-1111-2222-3333} \and
% Second Author\inst{2,3}\orcidID{1111-2222-3333-4444} \and
% Third Author\inst{3}\orcidID{2222--3333-4444-5555}}
% %
% \authorrunning{F. Author et al.}
% % First names are abbreviated in the running head.
% % If there are more than two authors, 'et al.' is used.
% %
% \institute{Princeton University, Princeton NJ 08544, USA \and
% Springer Heidelberg, Tiergartenstr. 17, 69121 Heidelberg, Germany
% \email{lncs@springer.com}\\
% \url{http://www.springer.com/gp/computer-science/lncs} \and
% ABC Institute, Rupert-Karls-University Heidelberg, Heidelberg, Germany\\
% \email{\{abc,lncs\}@uni-heidelberg.de}}
%
\author{Shaowei Zhu\orcidID{0000-0002-0335-1151}
\and Zachary Kincaid\orcidID{0000-0002-7294-9165}
}
% Second Author\inst{2,3}\orcidID{1111-2222-3333-4444} \and
% Third Author\inst{3}\orcidID{2222--3333-4444-5555}}
% %
\authorrunning{S. Zhu and Z. Kincaid}
% % First names are abbreviated in the running head.
% % If there are more than two authors, 'et al.' is used.
% %
 \institute{Princeton University, Princeton NJ 08540, USA\\
 \email{\{shaoweiz,zkincaid\}@cs.princeton.edu}}

\maketitle              % typeset the header of the contribution

\begin{abstract}
This paper studies the problem of synthesizing (lexicographic) polynomial ranking functions for loops that can be described in polynomial arithmetic over integers and reals.  While the analogous ranking function synthesis problem for \emph{linear} arithmetic is decidable, even checking whether a \emph{given} function ranks an integer loop is undecidable in the nonlinear setting.
We side-step the decidability barrier by working within the theory of linear integer/real rings (LIRR) rather than the standard model of arithmetic.  We develop a termination analysis that is guaranteed to succeed if a loop (expressed as a formula) admits a (lexicographic) polynomial ranking function.  In contrast to template-based ranking function synthesis in \emph{real} arithmetic, our completeness result holds for lexicographic ranking functions of unbounded dimension and degree,
and effectively subsumes linear lexicographic ranking function synthesis for linear \emph{integer} loops.

\keywords{termination \and ranking functions \and polynomial ranking functions \and 
lexicographic ranking functions \and monotone \and nonlinear arithmetic.}
\end{abstract}
\section{Introduction} % (fold)
\label{sec:introduction}

Ranking function synthesis refers to the problem of finding a well-founded metric that 
decreases at each iteration of a loop.  It is a critical subroutine in modern 
termination analyzers like Terminator \cite{cookTerminationProofsSystems2006}, 
Ultimate Automizer \cite{heizmannRefinementTraceAbstraction2009}, and ComPACT \cite{zhuTerminationAnalysisTears2021}.
One could synthesize ranking functions via a \emph{template}, i.e.,
fixing a particular form of ranking functions to be considered while
leaving parameters as free variables, and encoding the conditions 
for the function to rank the given loop as a logical formula, 
thereby reducing the synthesis problem to a constraint-solving problem. 
Provided that the resulting constraint-solving
problem is decidable, this method yields a complete procedure for
synthesizing ranking functions that match the template.  In particular, the
template-based method is the basis of complete synthesis of ranking
functions for linear and lexicographic linear ranking functions for loops whose
bodies and guards can be expressed in linear real or integer arithmetic \cite{podelskiCompleteMethodSynthesis2004,ben-amramRankingFunctionsLinearConstraint2014}.  
A limitation of
the approach is that it is only complete with respect to template languages that
can be defined by finitely many parameters (e.g., we may define a template for
all linear terms or degree-2 polynomials, but not polynomials of unbounded degree).
\footnote{One may imagine using the template paradigm to search for polynomial ranking
functions of successively higher degree until one is found; however, this yields
a complete \textit{semi-algorithm}, which fails to terminate if no polynomial ranking function exists.}

% degree of polynomials to be considered, but this approach does not terminate if no
% polynomial ranking functions of any degree exists.

In this paper, we study the problem of synthesizing polynomial ranking functions for nonlinear loops.  There are two apparent obstacles.
The first obstacle results from the difficulty of reasoning about nonlinear arithmetic. 
Nonlinear \textit{integer} arithmetic is
undecidable, and so even checking whether a \textit{given} function ranks a loop
is undecidable, let alone synthesizing one.
While nonlinear real arithmetic is decidable, it has high complexity--prior work has explored \textit{incomplete} constraint-solving approaches to avoid the cost of decision procedures for real arithmetic \cite{cousotProvingProgramInvariance2005,PLDI:ACFGM2021}, but this sacrifices the completeness property typically enjoyed by template-based methods.
The second obstacle is that the set of all polynomials cannot be described as a template language with finitely many parameters, thus precluding complete ranking function synthesis based on the template method.

%Our work explores an extension to the above template-based schema in the
%nonlinear case that could lead to a (relatively) \emph{complete} procedure. 
To tackle the undecidability problem, we adopt a weak theory of
nonlinear arithmetic $\ThZ$ that is
decidable \cite{kincaidWhenLessMore2023}. For the infinite template problem, 
we first compute the finite set of polynomials that are entailed to be \emph{bounded} modulo $\ThZ$
by the loop, and use them to define a template language with 
finitely many parameters to describe ``candidate terms'' for ranking functions. 
We then show that synthesis of
ranking functions consisting of non-negative linear combinations of these candidate terms
can be reduced to a constraint-solving problem in linear arithmetic.
The adoption of $\ThZ$ ensures that we do not lose completeness in any of the above steps, i.e.,
any ranking function modulo $\ThZ$ can be written as a nonnegative combination 
of the ``candidate terms'' in the template.
We thus have a procedure for synthesizing polynomial ranking functions that
is sound for the reals, and \textit{complete} in the sense that 
if a polynomial ranking function exists for a
formula (modulo $\ThZ$), then the analysis will find it.
Furthermore, we extend this analysis to one that is sound for the integers and
complete relative to lexicographic polynomial ranking functions (modulo $\ThZ$).

Using the framework of
algebraic termination analysis \cite{zhuTerminationAnalysisTears2021},
we extend our termination analysis on loops (represents as formulas) to whole programs
(including nested loops, recursive procedures, etc).
The completeness of the proposed procedures leads to \emph{monotone} 
end-to-end termination analyses for whole programs.
Informally, monotonicity guarantees that if the analysis can prove termination of a program $P$ and $P$ is transformed to a program $P'$ in a way that provides more information about its behavior
(e.g., by decorating the program with invariants discovered by an abstract interpreter)
then the analysis is certain to prove termination of $P'$ as well.

% Our work attacks the problem of ranking function synthesis in the presence of
% nonlinearity from a fresh angle, by adapting a theory of nonlinear arithmetic
% $\ThZ$ that is weaker than the standard theory \cite{kincaidWhenLessMore2023},
% but has many practical benefits. The theory is decidable, is weak but not weaker
% linear arithmetic, admits a \emph{complete} consequence finding procedure, and
% consequences modulo $\ThZ$ are finitely generated. Using $\ThZ$ as an enabling
% technology, we are able to overcome the two aforementioned fundamental
% difficulties for a complete ranking function synthesis for nonlinear loops. In
% particular, we have discovered that with additional constraints on the
% transition formula, \emph{all} polynomial or lexicographic polynomial ranking
% functions (w.r.t. $\ThZ$) can be finitely represented and computed, resulting in complete
% procedures for synthesizing ranking functions of this class.  

% Previous work \cite{kincaidWhenLessMore2023} has shown that although $\ThZ$ is a
% weak theory of arithmetic, it leads to a competitive invariant generation
% procedure for proving correctness properties of nonlinear programs. 
Our experimental evaluation establishes that the procedure based on polynomial
ranking function and lexicographic polynomial ranking function synthesis with
the background theory of $\ThZ$ is competitive for SV-COMP termination
benchmarks, especially for the nonlinear programs.

% section Introduction (end)

\section{Background} % (fold)
\label{sec:background}

\subsection*{Linear Algebra and Polyhedral Theory}

In the following, we use \textbf{linear space} to mean a linear space over the
field of rationals $\mathbb{Q}$. Let $L$ be a linear space. A set $C \subseteq
L$ is \textbf{convex} if for every $p, q \in C$ and every $\lambda \in [0, 1]$,
we have $\lambda p + (1-\lambda) q \in C$. We use $\conv{S}$ to denote the
\textbf{convex hull} of a set $S \subseteq L$, which is the smallest convex set
that contains $S$. A set $Q$ is a \textbf{polytope} if it is the
convex hull of a finite set.  A set $C \subseteq L$ is a \textbf{(convex) cone}
if it contains $0$ and is closed under addition and multiplication by $\Qplus$
(nonnegative rationals). For a set $G \subseteq L$, its \textbf{conical hull} is the
smallest cone that contains $G$, defined as $\cone{G} = \set{\lambda_1 g_1 +
\dots + \lambda_m g_m: \lambda_i \in \Qplus, g_i \in G}$. Given any $A, B
\subseteq L$, we use $A+B \defeq \set{ a + b : a \in A, b \in B}$ to denote
their Minkowski sum. 

A set $P \subseteq L$ is a \textbf{polyhedron} if $P = \cone{R} + \conv{V}$,
where $R, V$ are finite sets in $L$, and use the notation $P = \textit{V-rep}(R,
V)$. Convex polyhedra are effectively closed under intersection; that is, there
is a procedure \algorithmfont{intersect} such that for any finite
$R_1,V_1,R_2,V_2 \subseteq L$ we have
\[
\textit{V-rep}(\algorithmfont{intersect}(R_1,V_1,R_2,V_2)) =
\textit{V-rep}(R_1,V_1) \cap \textit{V-rep}(R_2,V_2) \ .
\]

\subsection*{The Ring of Rational Polynomials}
% \zak{Do we need general rings?  Or is $\mathbb{Q}[X]$ the only one we need?}
% A (commutative) ring $C$ is a ring equipped with addition $+$ and multiplication
% $\cdot$, with the additional constraints that it is an abelian group under
% addition and a monoid under multiplication, multiplication distributes over
% addition, and multiplication is commutative. 

% A ring homomorphism is a map $f : A \rightarrow B$ that
% preserves ring structure, i.e., for any $x, y \in A$ we have $f(x + y) = f(x) +
% f(y)$, $f(xy) = f(x)f(y)$, and $f(1) = 1$. 

For a finite set of variables $X$, we use $\mathbb{Q}[X]$ to denote the ring of polynomials over $X$ with rational coefficients, and $\mathbb{Q}[X]^1$ to denote the set of linear polynomials over $X$.
A set $I \subseteq \mathbb{Q}[X]$ is an 
\textbf{ideal} if it contains zero, is closed under addition, and for every $p \in \mathbb{Q}[X]$ and $q \in I$ we have $pq \in I$.
For a finite set $G = \set{g_1,\dots,g_n} \subseteq \mathbb{Q}[X]$, we use $\qideal{G} \defeq \set{ p_1g_1 + \dotsi + p_ng_n : p_1,\dots,p_n \in \mathbb{Q}[X]}$ to denote
the ideal generated by the elements in $G$.  By Hilbert's basis theorem, we have that every ideal in $\mathbb{Q}[X]$ can be written as $\qideal{G}$ for some finite set $G$.
Equivalently, for any ascending chain of ideals $I_1 \subseteq I_2 \subseteq \dots $ in
$\mathbb{Q}[X]$,  there exists an index $j$ such that $I_j = I_k$ for all $k \geq j$. 

Note that $\mathbb{Q}[X]$ is a linear space over $\mathbb{Q}$, and so cones, polytopes, and polyhedra consisting of polynomials are defined as above.  
We say that a cone $C \subseteq \mathbb{Q}[X]$ is \textbf{algebraic} if it is the Minkowski sum of an ideal and a finitely-generated convex cone \cite{kincaidWhenLessMore2023}.  For finite sets of polynomials $Z,P \subseteq \mathbb{Q}[X]$, we use
\[
	\regcone_X(Z, P) \defeq \set{ \sum_{z \in Z}  q_z z +
\sum_{p \in P} \lambda_p p : q_z \in \mathbb{Q}[X], \lambda_p \in \Qplus}
\]
to denote the algebraic cone generated by $Z$ and $P$; we call $Z$ and $P$ the ``zeros'' and ``positives'' of the cone, respectively.
When the set of variables is clear,
we often omit the subscript and just write $\regcone(Z, P)$.

% An ordered ring is a commutative
% ring with a total order on its elements \zak{This isn't quite the definition of an ordered ring, and I think think we need ordered rings}. 
% In order to encode statements of
% ordered rings into first-order logic, we define $\sigma_{or}$ to be the
% signature of ordered rings, consisting of binary addition (+) and multiplication
% $(\cdot)$ operators, the constants 0 and 1, equality, and a binary relation
% $\leq$.
% \textbf{commutative ring axioms}, $\ThCR$:
% \[ \begin{array}{lr}
%     \forall x,y,z\ldotp x+(y+z) = (x+y)+z & \text{(Associativity of $\cdot$)}\\
% 	\forall x,y,z\ldotp x \cdot (y \cdot z) = (x \cdot y) \cdot z & \text{(Associativity of $+$)}\\
%     \forall x,y\ldotp x+y = y+x\text{ and  }\forall x,y\ldotp x\cdot y = y \cdot x & \text{(Commutativity)}\\
%     \forall x\ldotp x+0 = x \text{ and } \forall x\ldotp x\cdot 1 = x & \text{(Identity)}\\
%     \forall x,y,z\ldotp x\cdot(y + z) = (x \cdot y) + (x \cdot z) & \text{(Distributivity)}\\
%     \forall x, \exists y\ldotp x + y = 0 & \text{(Additive inverse)}
% \end{array}\]

%% Algebraic cones are closed under intersection, and there is a procedure
%% \[ \algorithmfont{ac-intersect}(Z_1, P_1, Z_2, P_2) \]
%% (\cite{kincaidWhenLessMore2023}, Section 3.3.3) that computes the intersection
%% of two algebraic cones.

For any
algebraic cone $C \subseteq \mathbb{Q}[X]$, the set of \textit{linear}
polynomials in $C$ forms a convex polyhedron.  We use
$\algorithmfont{linearize}$ to denote the operation that computes this
set---that is, for any finite $Z,P \subseteq \mathbb{Q}[X]$, we have 
\[\textit{V-rep}(\algorithmfont{linearize}(Z,P)) = \regcone(Z,P) \cap \mathbb{Q}[X]^1 \ .\]

There is a
procedure $\algorithmfont{inverse-hom}$ for computing the inverse image of an algebraic cone
under a ring homomorphism (\cite{kincaidWhenLessMore2023}, Theorem 9).
More precisely, let $\regcone_X(Z, P)$ be an algebraic cone, $Y$ be
a set of variables, and $f:\mathbb{Q}[Y] \rightarrow
\mathbb{Q}[X]$ be a ring homomorphism, then 
\[
	\regcone_Y(\algorithmfont{inverse-hom}(Z, P, f, Y)) = \set{p \in \mathbb{Q}[Y]: f(p) \in \regcone_X(Z, P)} \ .
\] 
%    If $g:\mathbb{Q}[Y]^1 \rightarrow \mathbb{Q}[X]$ is a linear map. Then
% 	\[
% 	\product{\Qplus}{\algorithmfont{inverse-lin}(Z, P, g, Y)} =
% 	g^{-1}(\regcone_X(Z, P)) \ .
% 	\]
% \end{theorem}
% There is a procedure $R, V = \algorithmfont{linearize}(Z, P)$ that only
% retains linear polynomials in $\regcone(Z, P)$ and returns a polyhedron
% $\textit{V-rep}(R, V)$ for those linear consequences.
% \zak{Formalize:

In this paper it will be useful to define a common generalization algebraic
cones and convex polyhedra, which we call a \textit{algebraic polyhedra}. We say
that a set of polynomials $R \subseteq \mathbb{Q}[X]$ is an \textbf{algebraic polyhedron}
if it is the Minkowski sum of an algebraic cone and a convex
polytope\footnote{Recalling that a \textit{convex} polyhedron is the Minkowski
sum of a \textit{finitely generated convex} cone and a polytope.}.  An algebraic
polyhedron can be represented by a triple $\tuple{Z,P,V}$ where $Z,P,V$ are
finite sets of polynomials; such a triple represents the algebraic polyhedron
\[ \algpoly(Z, P, V) \defeq \regcone(Z, P) + \conv{V}\ . \]

\subsection*{The Arithmetic theory LIRR and Consequence Finding}

We use the following syntax for formulas:
\begin{align*}
 F,G \in \textbf{Formula} &::= p \leq q \mid p = q \mid \textit{Int}(p) \mid F \land G \mid F \lor G \mid \lnot F
\end{align*}
where $p$ and $q$ denote polynomials with rational coefficients over some set of variable symbols.
We regard the reals $\mathbb{R}$ as the standard interpretation of this language, with $\textit{Int}$ identifying the subset of integers $\mathbb{Z} \subset \mathbb{R}$.

Kincaid et al. \cite{kincaidWhenLessMore2023} defined another class of interpretations for the above language of formulas called \textit{linear integer/real rings}.  A linear integer/real ring is a commutative ring equipped with an order and an integer predicate which obeys certain axioms of the theories of linear real and linear integer arithmetic.  The standard interpretation $\mathbb{R}$ is an example of a linear integer/real ring.  A ``nonstandard'' example is the ring $\mathbb{Q}[x]$, where $p \leq q$ iff $p$ precedes $q$ lexicographically (e.g., $-x^3 < x < x^2 - x < x^2 < x^2 + x $) and $\textit{Int}(p)$ holds iff $p$'s coefficients are integers. 

The fact that the theory $\ThZ$ of linear integer/real rings (refer to \cite{kincaidWhenLessMore2023} for an axiomatization) admits such nonstandard (and inequivalent) models means that the \textit{theory} is incomplete. Nevertheless it has desirable algorithmic properties that we will make use of in our ranking function synthesis procedures. We discuss
the limitations brought by $\ThZ$ in \autoref{eg:incompleteness-prf-due-to-lirr}.

Since the reals $\mathbb{R}$ is a model for $\ThZ$, if we have $F
\lirrmodels G$, we also have $F \models_{\mathbb{R}} G$. However, in this paper
we are mostly concerned with entailment modulo $ \ThZ $ rather than the
standard model, thus we abbreviate $F \lirrmodels G$ to $ F \models G $ by
default.

For a formula $F$ and a set of variables $X$, we use
\[\CnX{F}{X} \defeq \set{ p \in \mathbb{Q}[X] : F \models p \geq 0 }\]
to denote the \textbf{nonnegative cone} of $F$ (over $X$). 
For example, given $X = \set{x, y}$  
\[
\CnX{x = 2 \land y \leq 1}{X} = \regcone(\set{x - 2}, \set{1, 1 - y}) \ . 	
\]
$\CnX{F}{X}$ is an algebraic cone, and there is an algorithm for computing it (Algorithm 2 of \cite{kincaidWhenLessMore2023}), which we denote by $\algorithmfont{consequence}(F, X)$.  We furthermore have that if $\tuple{Z,P} = \algorithmfont{consequence}(F, X)$, then
$\qideal{Z} = \set{ z \in \mathbb{Q}[X] : F \models z = 0 }$.

% \zak{Example of a formula with its nonnegative cone?}

% \zak{We regarded $\mathbb{R}$ as the standard model in POPL'23}

\subsection*{Transition Systems and Transition Formulas}

% A transition system $T = \tuple{S_T, R_T}$ consists of a set of states $S_T$ and
% a set of transition relations $R_T \subseteq S_T \times S_T$.  
% We use $x
% \rightarrow_T x'$ to denote that $(x,x')$ belongs to $R_T$. 

For a set of variables $X$, we use $X' \defeq \set{ x' : x \in X}$ denote a set
of ``primed copies''.  For a polynomial $p \in \mathbb{Q}[X]$, we use $p'$ to
denote the polynomial in $\mathbb{Q}[X']$ obtained by replacing each variable
$x$ with its primed copy $x'$. A \textbf{transition formula} over a set of
variables $X$ is a formula $F$ whose free variables range over
$X$ and $X'$.  We use $\mathbf{TF}(X)$ to denote the set of all transition
formulas over $X$. For a transition formula $F \in \mathbf{TF}(X)$ and real
valuation $v,v' \in \mathbb{R}^X$, we use $v \rightarrow_F v'$ to denote that
$\mathbb{R},[v,v'] \models F$, where $\mathbb{R}$ denotes the standard model and $[v,v']$ denotes the valuation that maps each $x \in
X$ to $v(x)$ and each $x' \in X'$ to $v'(x)$.  
A \textbf{real execution} of a transition formula $F$ is an infinite sequence
$v_0, v_1, \dots \in \mathbb{R}^X$ such that for each $i$, we have $v_i
\rightarrow_F v_{i+1}$; we say that $v_0,v_1,\dots$ is an \textbf{integer
execution} if additionally each $v_i \in \mathbb{Z}^X$.  We say that \textit{$F$
terminates over $\mathbb{R}$} if it has no real executions, and \textit{$F$
terminates over $\mathbb{Z}$} if it has no integer executions.

% An $\ThZ$ transition formula $F(X, X')$ is an
% $\ThZ$ formula whose free variables range over a finite set of transition
% variables $X$ and a set of primed copies $X'$. Given a particular $\ThZ$ model
% $\mathfrak{M}$, an $\ThZ$ transition formula $F(X, X')$ defines the set of
% $F$-transitions $T_F \subseteq S \times S$, where the state space $S$ consists
% of $\mathfrak{M}$-valued maps $v: X \rightarrow (U^{\mathfrak{M}})^X$, and there
% is a transition $v \rightarrow_F v'$ iff $\mathfrak{M},[v,v'] \models F$, where
% $[v,v']$ denotes the valuation that interprets each $x \in X$ as $v(x)$, and
% each $x' \in X'$ as $v'(x)$.  We say a sequence $x_0, x_1, x_2, \dots$ is an
% $\mathfrak{M}$-execution of an $\ThZ$ transition formula $F$ if each $x_i
% \rightarrow_S x_{i+1}$ is an $F$-transition under $\mathfrak{M}$. In particular,
% a $\mathbb{Q}$-execution corresponds to a sequence of states in which all the
% transition variables are rationals. We define a $\mathbb{Z}$-execution to be a
% $\mathbb{Q}$-execution in which all the transition variables take integer
% values. In practice, we only care about whether a transition formula terminates
% when the transition variables take rationals or integers. Thus we say a
% transition formula $F$ \emph{terminates over $\mathbb{Q}$}, or simply
% \emph{terminates}, if there is no infinite length $\mathbb{Q}$-model execution
% of $F$ starting from any initial state. Similarly, we say $F$ \emph{terminates
% over $\mathbb{Z}$} if there is no infinite length $\mathbb{Z}$-execution of $F$
% starting from any state.

\subsection*{Ranking Functions}

Let $F \in \mathbf{TF}(X)$ be a transition formula.  
We say that $r \in \mathbb{Q}[X]$ is a  \textbf{polynomial
ranking function} ($\PRF$) for $F$ (modulo $\ThZ$) if $F \models 0 \leq r$ and $F
\models r' \leq r - 1$. The set of all polynomial ranking functions of $F$
(modulo $\ThZ$) is denoted $\PRF(F)$.

\begin{lemma}
	\label{lem:prf-implies-termination-over-Q}
	If $\PRF(F) \neq \emptyset$, then $F$ terminates over $\mathbb{R}$.
\end{lemma}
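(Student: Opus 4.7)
The plan is to argue by contradiction. Suppose $r \in \PRF(F)$, so by definition $F \models 0 \leq r$ and $F \models r' \leq r - 1$ (entailment modulo $\ThZ$). Since $\mathbb{R}$ is a model of $\ThZ$ (as noted in the background), entailment modulo $\ThZ$ implies entailment over the standard model $\mathbb{R}$. Therefore, for any real valuations $v,v' \in \mathbb{R}^X$ with $v \rightarrow_F v'$, we have $r(v) \geq 0$ and $r(v') \leq r(v) - 1$.

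Now assume towards a contradiction that $F$ does not terminate over $\mathbb{R}$, so there exists a real execution $v_0, v_1, v_2, \dots \in \mathbb{R}^X$ with $v_i \rightarrow_F v_{i+1}$ for every $i$. Applying the two inequalities above to each transition, a straightforward induction on $n$ yields $r(v_n) \leq r(v_0) - n$, while simultaneously $r(v_n) \geq 0$. Combining these gives $n \leq r(v_0)$ for every $n \in \mathbb{N}$, which is impossible since $r(v_0)$ is a fixed rational number. This contradiction establishes that no real execution exists, i.e., $F$ terminates over $\mathbb{R}$.

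There is no real obstacle here; the only subtlety worth flagging is the appeal to the fact that $\mathbb{R} \models \ThZ$, which is what licenses translating entailment modulo $\ThZ$ into a concrete statement about real-valued executions. Everything else is the standard well-foundedness argument for ranking functions mapping into $\mathbb{Q}_{\geq 0}$ with strict decrease bounded away from zero.
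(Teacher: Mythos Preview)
Your proof is correct and follows essentially the same approach as the paper: both use that $r$ is bounded below by $0$ and decreases by at least $1$ per step to rule out infinite executions (the paper phrases this via the well-ordering of $\set{n \in \mathbb{Z} : n \geq 0}$ applied to $\floor{r}$, while you spell out the infinite descent directly). One tiny slip: $r(v_0)$ is a fixed \emph{real} number, not necessarily rational (since $v_0 \in \mathbb{R}^X$), but this does not affect the argument.
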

\begin{proof}
	If $r \in \PRF(F)$, then 
$\floor{r(X)}$ is a ranking function mapping $\mathbb{R}^X$ into $\mathbb{Z}$ that is
well-ordered by a relation $\preceq$, defined as  $x \preceq y$ iff $x \geq 0
\land x \leq y$, where $\leq$ is the usual order on the integers.
\end{proof}

We now consider lexicographic termination arguments.
We define a \emph{quasi-polynomial ranking function} ($ \QPRF $) for a
 transition formula $F \in \mathbf{TF}(X)$ (modulo $\ThZ$) to
be a polynomial $r \in \mathbb{Q}[X]$ such that
\[
	F \models r - r' \geq 0 \land r \geq 0 \ .
\]
We say that a sequence of
polynomials $r_1,\dots, r_n \in \mathbb{Q}[X]$ is a dimension-$n$ \textbf{weak
lexicographic polynomial ranking function} ($\WLPRF$) for $F$ (modulo $\ThZ$) if
\begin{align*}
    &r_1 \in \QPRF(F) \\ 
    &r_2 \in \QPRF(F \land r_1'=r_1) \\
    &\vdots \\ 
    &r_n \in \QPRF\left(F \land \bigwedge_{i=1}^{n-1} r_i' = r_i\right) \\
    &F \land \bigwedge_{i=1}^{n} r_i' = r_i \models \false \ .
\end{align*}

% \begin{enumerate}
% 	\item (Bounded) $F \land \bigwedge_{j=1}^{i-1} r_j' = r_j \models_T r_i \geq 0$ for each $i \in [n]$;
% 	\item (Decreasing) $F \land \bigwedge_{j=1}^{i-1} r_j' = r_j \models_T r_i' \leq r_i$ for each $i \in [n]$;
% 	\item (Coverage) $F \land \bigwedge_{j=1}^n r_j' = r_j \models_T \false $ \ .
% \end{enumerate}
%% We also follow the conventions in
%% \cite{ben-amramRankingFunctionsLinearConstraint2014} and  
%% call each component $r_j$ a quasi ranking function for the formula $F \land
%% \bigwedge_{j=1}^{i-1} r_j' = r_j$. 
\autoref{lem:wlprf-implies-termination-over-Z} sketches the proof that the
existence of $\WLPRF$ proves termination of $F$ over $\mathbb{Z}$.

\begin{lemma} \label{lem:quasi-rf-implies-termination}
    Let $F \in \mathbf{TF}(X)$ be a transition formula. If $r \in \mathbb{Q}[X]$ is a quasi-ranking function for $F$,
    i.e., $F \models r' \leq r \land r \geq 0$, and furthermore $F \land r' = r $ terminates over the integers,
    then so does $F$.
\end{lemma}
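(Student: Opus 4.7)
The plan is to argue by contradiction: suppose $F$ has an infinite integer execution $v_0, v_1, v_2, \dots \in \mathbb{Z}^X$ and derive that $F \land r' = r$ must also have one, contradicting its termination over $\mathbb{Z}$. The key intermediate step is to show that along any infinite integer execution, the values $r(v_0), r(v_1), \dots$ must eventually become constant.

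First I would extract the consequences of $r$ being a quasi-ranking function. Since $\ThZ$ is sound for the standard model $\mathbb{R}$, the entailments $F \models r \geq 0$ and $F \models r' \leq r$ hold in $\mathbb{R}$, so for every transition $v_i \to_F v_{i+1}$ of the integer execution we have $r(v_i) \geq 0$ and $r(v_{i+1}) \leq r(v_i)$. Hence the sequence $r(v_0) \geq r(v_1) \geq \cdots \geq 0$ is a non-increasing sequence of nonnegative rationals.

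The main obstacle--which is really the reason the hypothesis ``terminates over the integers'' (rather than over $\mathbb{R}$) is essential--is that a non-increasing sequence of nonnegative rationals need not stabilize. This is where I use that the execution is integer-valued and $r$ is a \emph{polynomial} with rational coefficients. Let $d \in \mathbb{Z}_{>0}$ be a common denominator of the finitely many coefficients of $r$; then $d \cdot r(v_i) \in \mathbb{Z}$ for every $i$ since $v_i \in \mathbb{Z}^X$. The sequence $d \cdot r(v_0) \geq d \cdot r(v_1) \geq \cdots \geq 0$ is thus a non-increasing sequence of nonnegative integers, which must stabilize: there exists $N$ such that $r(v_i) = r(v_N)$ for all $i \geq N$.

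Finally, I assemble the contradiction. For every $i \geq N$ we have $v_i \to_F v_{i+1}$ and $r(v_{i+1}) = r(v_i)$, so $v_i \to_{F \land r' = r} v_{i+1}$. Hence the tail $v_N, v_{N+1}, \dots$ is an infinite integer execution of $F \land r' = r$, contradicting the assumption that $F \land r' = r$ terminates over $\mathbb{Z}$. Therefore $F$ has no infinite integer execution, i.e., $F$ terminates over $\mathbb{Z}$.
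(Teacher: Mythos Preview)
Your proof is correct and follows essentially the same approach as the paper: argue by contradiction, use the well-ordering of nonnegative integers to conclude that the values $r(v_i)$ eventually stabilize along any infinite integer execution, and observe that the tail then witnesses non-termination of $F \land r' = r$. The only cosmetic difference is that the paper scales $r$ up front to have integer coefficients (noting that quasi-ranking functions are closed under positive scaling), whereas you carry the common denominator $d$ through the argument; these are the same idea.
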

\begin{proof}
    % Without loss of generality, we could assume $r$ maps any program state
    % to nonnegative integers $\mathbb{Z}^{\geq 0}$. 
  Since quasi-ranking functions are closed under scaling by nonnegative scalars,
  we may assume that $r$ has integer coefficients without loss of generality.
    Suppose for a contradiction that $F$ has an infinite
    integer execution $x_0, x_1, \dots$. Since $r(x_i) \geq r(x_{i+1})$ for all $i$, 
    and the range of $r$ is restricted to $\mathbb{Z}^{\geq 0}$,
    there exists some $n$ such that $r(x_{n}) = r(x_{n+1}) = \dots$. But this is impossible since 
    $F \land r' = r$ terminates over the integers.
\end{proof}

\begin{lemma}
	\label{lem:wlprf-implies-termination-over-Z}
	If a transition formula $ F $ admits a $ \WLPRF $ (modulo the theory $\ThZ$),
	then $F$ terminates over $ \mathbb{Z} $.
\end{lemma}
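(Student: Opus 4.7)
The plan is to prove this by induction on the dimension $n$ of the $\WLPRF$, using \autoref{lem:quasi-rf-implies-termination} as the key inductive tool. The intuition is that each $r_i$ provides a quasi-ranking argument, and adding the equality $r_i' = r_i$ to $F$ strengthens the formula so that the residual $(r_{i+1},\dots,r_n)$ is a shorter lexicographic argument. The main point to verify is that the definitions line up: a $\QPRF$ for $F$ satisfies exactly the hypothesis $F \models r' \leq r \land r \geq 0$ required by \autoref{lem:quasi-rf-implies-termination}, since $F \models r - r' \geq 0$ is equivalent to $F \models r' \leq r$.

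For the base case $n = 1$, we have $r_1 \in \QPRF(F)$ and $F \land r_1' = r_1 \models \false$. An unsatisfiable formula has no integer executions, so $F \land r_1' = r_1$ trivially terminates over $\mathbb{Z}$, and \autoref{lem:quasi-rf-implies-termination} applied to $r_1$ yields termination of $F$ over $\mathbb{Z}$.

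For the inductive step, assume the lemma holds for all $\WLPRF$s of dimension $n-1$, and suppose $r_1,\dots,r_n$ is a $\WLPRF$ for $F$. The sequence $r_2,\dots,r_n$ is a dimension-$(n-1)$ $\WLPRF$ for the transition formula $F \land r_1' = r_1$: indeed, for each $i \geq 2$, the required condition $r_i \in \QPRF\bigl((F \land r_1'=r_1) \land \bigwedge_{j=2}^{i-1} r_j' = r_j\bigr)$ is just a regrouping of the $i$-th clause of the original $\WLPRF$ definition, and the termination condition at the end becomes $(F \land r_1' = r_1) \land \bigwedge_{i=2}^n r_i' = r_i \models \false$, which is literally the same final clause. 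By the induction hypothesis, $F \land r_1' = r_1$ terminates over $\mathbb{Z}$. Since $r_1 \in \QPRF(F)$, applying \autoref{lem:quasi-rf-implies-termination} to $r_1$ gives termination of $F$ over $\mathbb{Z}$, completing the induction.

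There is no real obstacle here beyond the bookkeeping of the inductive reformulation; the lemma is essentially a packaging of \autoref{lem:quasi-rf-implies-termination} iterated $n$ times. The only subtle aspect is recognizing that appending $r_1' = r_1$ to $F$ commutes correctly with the lexicographic definition, so that the tail $r_2,\dots,r_n$ genuinely forms a $\WLPRF$ of the strengthened formula.
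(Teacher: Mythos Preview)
Your proof is correct and follows essentially the same approach as the paper: induction on the dimension $n$ of the $\WLPRF$, with \autoref{lem:quasi-rf-implies-termination} providing the inductive step. The only cosmetic difference is that the paper takes $n=0$ as the base case (where the final clause alone gives $F \models \false$), whereas you start at $n=1$; your version spells out the inductive bookkeeping that the paper leaves implicit.
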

\begin{proof}
 We prove this by induction on the dimension $ n $ of $ \WLPRF $ of $ F $.
 The base case holds vacuously when $n = 0$ since $F$ is unsatisfiable,
 and the inductive case holds by \autoref{lem:quasi-rf-implies-termination}.
 % When
	% $ n = 1 $, we have that $ F \models r_1 \geq 0 \land r_1 - r_1' \geq 0 $ and
	% $ F \land r_1 = r_1' \models \false $. Thus $ F $ also entails these
	% consequences modulo the standard theory for rationals and $ r_1 $ is a
	% ranking function for integer executions of $ F $. Suppose the
	% statement is true for all length $ n $ $ \WLPRF $ and we want to prove the
	% case for length $ (n+1) $.  Let the first dimension of the $\WLPRF$ of $F$ be
 %        $r_1$, which is a quasi-ranking function for $F$. 
 %        Then $F \land r_1' = r_1$ has a dimension-$n$ $\WLPRF$ and thus terminates
 %        by the inductive hypothesis. Using \autoref{lem:quasi-rf-implies-termination},
 %        we know that $F$ also terminates. 
 % Then we have a length $ (n+1) $ $ \WLPRF $ for $
	% F $. All permitted transitions $ [v, v'] \models F $ satisfy
	% $r_1(v) \geq r_1(v') \land r_1(v) \geq 0$, and can be partitioned into two sets $ U = \set{(v,
	% v') : r_1(v) > r_1(v')} $ and $ V = \set{(v, v') : r_1(v) = r_1(v')} $.
	% Notice transitions in $V$ are valid transitions for $F \land r_1 = r_1'$,
	% which has a length $n$ $\WLPRF$ and thus terminates over $\mathbb{Z}$ by the
	% inductive hypothesis. Thus there could be no infinite sequences of
	% transitions in $V$ between any two transitions in $U$, and these transitions
	% do not increase the value of $r_1$. Since all transitions $(v, v')$ in $U$
 %        satisfy $r_1(v) > r_1(v') \land r_1(v) \geq 0$, any integer executions in which all the
	% transitions are taken from $U$ must have finite lengths. Thus there is no
	% infinite integer execution of $F$.
\end{proof}

Note that \autoref{lem:wlprf-implies-termination-over-Z} holds only for integer executions. Ben-Amram and Genaim \cite{ben-amramRankingFunctionsLinearConstraint2014} showed
that existence of a weak lexicographic linear ranking function (LLRF) for a 
topologically closed linear formula implies
existence of an LLRF for loop with real variables, 
but the argument fails for \emph{nonlinear} formulas (even
modulo $\ThZ$). Consider the following $ \ThZ $ transition formula over reals
$n, z$
\[
	F \defeq z \geq 0 \land n \geq 2 \land n' = 2 n \land z \geq z' \land nz' = nz - 1\ .
\]
Then $ F \models z \geq 0 \land z \geq z'  $, and also $ F \land z = z' \models
\false $. Thus $z$ does decrease at every iteration of $F$ and its value is
bounded from below. However, $ F $ does not terminate since the rate at which
$z$ decreases diminishes too quickly. 

% \zak{Even existence of a PRF doesn't imply termination for all LIRRs...

% I think the interested point to make here is that Genaim and Ben-Amram showed that existence of a WLLRF for a topologically closed linear formula implies existence of an LLRF, but the argument fails for \textit{non-linear} formulas.  (We know it works for \textit{compact} formulas, but that's probably not worth mentioning)
% }

% section Background (end)

\section{Polynomial Ranking for $\ThZ$ Transition Formulas} % (fold)
\label{sec:prf}

In this section, we consider the problem of synthesizing polynomial ranking
functions for transition formulas modulo $\ThZ$.
Observe that for a transition formula $F \in \mathbf{TF}(X)$, the polynomial ranking functions $\PRF(F)$ of $F$ can be decomposed as
$\PRF(F) = \textit{Bounded}(F) \cap \textit{Decreasing}(F)$
where $\textit{Bounded}(F)$ are the bounded and decreasing polynomials of $F$, respectively:
\begin{align*}
    \textit{Bounded}(F) &\defeq \set{ p \in \mathbb{Q}[X] : F \models p \geq 0}\\
    \textit{Decreasing}(F) &\defeq \set{ p \in \mathbb{Q}[X] : F \models p' \leq p - 1}
\end{align*}
Thus, one approach to computing $\PRF(F)$ is to compute the sets of bounded and decreasing polynomials, and then take the intersection.

First, we observe that we can use this strategy to synthesize \textit{linear} ranking functions using 
the primitives defined in Section~\ref{sec:background}\footnote{This is essentially a recasting of the classic algorithms linear ranking function synthesis \cite{ben-amramRankingFunctionsLinearConstraint2014} for LRA, restated in our language.}.
\begin{itemize}
\item The convex polyhedron of degree-1 polynomials of 
$\textit{Bounded}(F)$ can be computed as $\texttt{linearize}(\texttt{consequence}(F,X))$,
\item The convex polyhedron of degree-1 polynomials of
$\textit{Decreasing}(F)$ can be computed as follows.
Define $f : \mathbb{Q}[X] \rightarrow \mathbb{Q}[X \cup X']$ to be the homomorphism mapping $x \mapsto x-x'$, and observe that
\[
\textit{Decreasing}(F) \cap \mathbb{Q}[X]^1
= \set{ p \in \mathbb{Q}[X]^1 : F \models f(p) - 1 \geq 0}
\]
We proceed by first computing the polyhedron
\[ Q \defeq \set{ p + a : p \in \mathbb{Q}[X]^1, a \in \mathbb{Q}. F \models f(p) + a \geq 0} \]
as $\texttt{linearize}(\texttt{inverse-hom}(\texttt{consequence}(F,X \cup X'), f))$.
Then we intersect $Q$ with the hyperplane consisting of linear polynomials with constant coefficient -1,
 and then take the Minkowski sum with the singleton $\set{1}$ to get $\textit{Decreasing}(F) \cap \mathbb{Q}[X]^1$.
\end{itemize}

The essential difficulty of adapting this strategy to find polynomial ranking functions of unbounded degree is that the function $g : \mathbb{Q}[X] \rightarrow \mathbb{Q}[X \cup X']$ mapping $p \mapsto p'-p$ is not a homomorphism (the function $f$ defined above agrees with $g$ on \textit{linear} polynomials, but not on polynomials of greater degree).

Our method proceeds as follows. As we will later see in \autoref{alg:prf}, we can adapt the above strategy to compute the intersection of 
$\PRF(F)$ with some ``template language'' $\set{ a_1p_1 + \dots + a_np_n : a_1,\dots,a_n \in \mathbb{Q} }$ for fixed polynomials $p_1,\dots,p_n$.  Our insight is to use
the cone generators of $\textit{Bounded}(F)$ to define $p_1,\dots,p_n$.  This yields a ranking function synthesis procedure that, in general, is sound but incomplete;
however, it is complete under the assumption that $F$ is \textit{zero-stable}.  In Section~\ref{sec:zero-stable-restrictions} we define zero-stability and show that assuming zero-stability is essentially without loss of generality, and in Section~\ref{sec:prf} we define a procedure for computing $\PRF(F)$ for zero-stable $F$.

\subsection{Zero-Stable Transition Formulas}
\label{sec:zero-stable-restrictions}

Consider a transition formula $F$ defined as
\[ F \defeq x = 0 \land y \geq 0 \land (x')^2 = y - y' - 1 \ . \] 
Observe that $F \models x = 0$.  $F$ has a PRF $x^2 + y$, but it's hard to find in the
sense that it's not a linear combination of the generators of
$\textit{Bounded}(F)$ ($x$, $-x$, and $y$).  But when $x' \neq 0$, the loop terminates
immediately. Thus we can consider the restriction $F \land x' = 0$, which
admits the linear ranking function $y$. The zero-stable restriction process we
introduce below formalizes this process.

We define a transition formula $ F \in \mathbf{TF}(X) $ to be \textbf{zero-stable}
if for all polynomials $ p \in \mathbb{Q}[X] $ such that $ F \models p = 0 $, it
is the case that $ F \models p' = 0 $.
We give an algorithm for computing the
weakest zero-stable transition formula that entails the original formula in
\autoref{alg:zero-stable-restrict}, and we note that the algorithm
preserves termination behavior (\autoref{lem:zero-stable-restrict-properties}).

% We define a zero-stable restriction $ \hat{F}
% $ of $ F $ to be a zero-stable transition formula such that $ \hat{F} \models F $.

% \zak{I think it'll be a little easier on the reader  if we do not deine
% zero-stable restrictions, and rephrase lemma 1 as "$\hat{F}$ is the weakest
% zero-stable formula such that $\hat{F} \models F$." } \shaowei{I think it may be
% slightly more convenient to talk about weakest zero-stable restriction of F,
% than the weakest zero-stable formula that implies F, especially if there are
% multiple formuals involved like F and G and the implication is modulo LIRR.}
% \zak{Optimize for the reader's convenience rather than the writer's}

% The following lemma shows that zero-stable restrictions do not shrink the set of
% $ \PRF $s of any transition formula and preserves termination.
% It also entails that if a complete
% $ \PRF $ synthesis procedure on $ G $ succeeds, then it will also succeed on
% $ F $ if $ F \models G $. \autoref{alg:zero-stable-restrict} computes the weakest possible
% zero-stable restriction.

\begin{algorithm}
	\caption{The zero-stable restriction of a transition formula. \label{alg:zero-stable-restrict}}
	\SetKwFunction{Fconseq}{consequence}
	\SetKwFunction{Ftermprf}{terminates-WPRF}
	\SetKwFunction{Fzerostable}{zero-stable-restrict}
	\SetKwFunction{Fprf}{prf-zero-stable}
	\SetKwFunction{Fsat}{sat-lirr}
	\Fn{\Fzerostable{$F$}}{

		\KwIn{A transition formula $F \in \mathbf{TF}(X)$. }
		\KwOut{The weakest zero-stable transition formula that entails $F$. }
		$\tuple{Z, \_} \gets $\Fconseq{F,X}\;  \tcc{$\qideal{Z} = \set{p \in \mathbb{Q}[X] : F \models p = 0}$}
		\Repeat{$\qideal{Z} =\qideal{Z'}$}{
			$Z' \gets Z$\;
			$F \gets F \land \bigwedge_{z \in Z} z' = 0$\;
			$\tuple{Z, \_} \gets $ \Fconseq{F,X}\;
		}
		\Return{F}
	}
\end{algorithm}

\begin{lemma}
	\label{lem:zero-stable-restrict-properties}
	Let $F$ be an $\ThZ$ transition formula, and
	\[
		\hat{F} \defeq \algorithmfont{zero-stable-restrict}(F) \ .
	\]
	\begin{enumerate}
		\item \autoref{alg:zero-stable-restrict} computes the weakest zero-stable formula that entails $F$.
		% \item Zero-stable restriction does not shrink the set of $ \PRF $ of transition formulas,
		%       i.e., if $ r $ is a $ \PRF $ for $ F $, then it is also a $ \PRF $ for $ \hat{F} $.
        % \zak{I think we can omit this one}
		\item $ F $ terminates iff $ \hat{F} $ terminates.
		% \item If $F \models G$, then $\hat{F} \models G_0$ for their weakest zero-stable restrictions. \zak{I think we can omit this one}
	\end{enumerate}
\end{lemma}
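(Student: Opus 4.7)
The plan is to split the proof into four parts, addressing termination of the loop, then proving each of the two numbered claims.

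First I would establish that \autoref{alg:zero-stable-restrict} terminates. The key observation is that across successive iterations the formula $F$ only gets strengthened (we conjoin more equalities), so the ideal $\qideal{Z}$ of polynomials in $\mathbb{Q}[X]$ entailed to be zero forms an ascending chain. By Hilbert's basis theorem this chain must stabilize, at which point the loop condition $\qideal{Z} = \qideal{Z'}$ is satisfied. This also explains the invariant that, at the exit, the set of entailed zero polynomials in $\mathbb{Q}[X]$ is precisely $\qideal{Z}$.

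For part (1), zero-stability is immediate from the exit condition: any polynomial $p \in \mathbb{Q}[X]$ with $\hat{F} \models p = 0$ lies in $\qideal{Z}$, and $\hat{F}$ already contains the conjunct $\bigwedge_{z \in Z} z' = 0$, from which we deduce $\hat{F} \models p' = 0$. For \emph{weakest}, suppose $G$ is any zero-stable transition formula with $G \models F$. I would show by induction on loop iterations that $G$ entails every intermediate value of $F$. In the step case, $G \models F_i$ implies $G \models z = 0$ for each $z$ in the current $Z$; since $G$ is zero-stable, $G \models z' = 0$ as well, and thus $G$ entails the strengthened formula $F_{i+1}$. Hence $G \models \hat{F}$.

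For part (2), the direction ``$F$ terminates $\Rightarrow$ $\hat{F}$ terminates'' is immediate since $\hat{F} \models F$ (so every execution of $\hat{F}$ is an execution of $F$). The nontrivial direction is the converse: I must show that every infinite execution $v_0, v_1, \dots$ of $F$ is also an execution of $\hat{F}$. The key observation is that for every $z \in \qideal{Z}$ (with $Z$ the final value), we have $F \models z = 0$; since each transition $v_i \to v_{i+1}$ satisfies $F$, the pre-state value $z(v_i) = 0$ for every $i \geq 0$. In particular $z(v_{i+1}) = 0$, which means the transition also satisfies the primed equation $z' = 0$. Hence every transition of the execution satisfies all the additional conjuncts of $\hat{F}$, so the execution is an execution of $\hat{F}$.

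The main obstacle, I expect, is in the termination-preservation argument for part (2): one must carefully track that the zero polynomials computed after the algorithm converges are entailed by the \emph{original} $F$ (not merely by $\hat{F}$), so that the pre-state zero constraints actually hold along an execution of $F$. This follows because at every iteration we only added consequences of the previous $F$, so by induction every $z \in \qideal{Z}$ at the end satisfies $F \models z = 0$. Once this bookkeeping is in place, the rest reduces to the straightforward observations outlined above.
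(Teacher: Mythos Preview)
Your treatment of termination of the algorithm and of part (1) matches the paper's argument essentially verbatim. The problem is in part (2), in exactly the place you flagged as ``the main obstacle''---and your resolution of it is incorrect.

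You claim that for every $z$ in the \emph{final} set $Z$ we have $F \models z = 0$, justifying this by ``at every iteration we only added consequences of the previous $F$.'' Both halves of this are false. The conjuncts added at each step are of the form $z' = 0$, and these are \emph{not} consequences of the current formula (if they were, the formula would already be zero-stable). More importantly, the zero polynomials discovered at later iterations need not be zeros of the original $F$. In the paper's own example $F \equiv x=0 \land y\ge 0 \land y'=-(x')^2+y-1+z' \land z=x'$, after the first round we add $x'=0$ and then discover $z=0$; but the original $F$ does \emph{not} entail $z=0$, since $x'$ (and hence $z$) is unconstrained in $F$. So your direct argument ``$F \models z = 0$, hence $z(v_i)=0$ along any $F$-execution'' collapses.

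The fix is to make the induction explicit over the iterates $F^{(k)}$ rather than jump straight from $F$ to $\hat F$. The paper shows, for each $k$, that every infinite execution of $F^{(k)}$ is an execution of $F^{(k+1)}$: given $v_0,v_1,\dots$ with $v_i \rightarrow_{F^{(k)}} v_{i+1}$ for all $i$, each $z \in Z^{(k)}$ satisfies $F^{(k)} \models z=0$, so looking at the transition $v_{i+1} \rightarrow_{F^{(k)}} v_{i+2}$ gives $z(v_{i+1})=0$, which is exactly the primed constraint needed for $v_i \rightarrow_{F^{(k+1)}} v_{i+1}$. Chaining these steps yields that any infinite execution of $F$ is one of $\hat F$. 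Your overall strategy is right; you just need to thread the induction through the intermediate formulas rather than assert a false global invariant.
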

\begin{proof}
	Let $F^{(k)}$ and $Z^{(k)}$ denote the value of $F$ and $Z$ after the $k$-th iteration of the loop in
	\autoref{alg:zero-stable-restrict}, respectively.

	We first prove 1. Clearly, if \autoref{alg:zero-stable-restrict} terminates at some iteration $n$, then 
  $\hat{F} = F^{(n)}$ is zero stable and entails $F$.  It remains to show that (a) $F^{(n)}$ is the \textit{weakest} such formula, and (b) the algorithm terminates.
		      \begin{enumerate}[label=(\alph*)]
			      \item We show by induction that for any zero-stable formula $G$ that entails $F$,
			            it is the case that $G \models F^{(k)}$ for all $k$.
			            The base case holds by assumption, since $G \models F = F^{(0)}$.
			            Now suppose that $G \models F^{(k)}$, and we wish to show that
			            $G \models F^{(k+1)}$. Since for each $z \in Z^{(k)}$ we have
               $G \models F^{(k)} \models z = 0$, and $G$ is zero-stable, we know $G \models z' = 0$.  It follows $G \models (F^{(k)} \land \bigwedge_{z \in Z^{(k)}} z' = 0) = F^{(k+1)}$.
			      \item We show that \autoref{alg:zero-stable-restrict} terminates.
			            Suppose that it does not.
               Then $\qideal{Z^{(0)}} \subsetneq \qideal{Z^{(1)}} \subsetneq \dotsi$ forms an infinite strictly ascending chain of ideals of $\mathbb{Q}[X]$, contradicting Hilbert's basis theorem.
		      \end{enumerate}
		% \item For any $ F^{(k)} $, we have $ F^{(k)} \models F$.
		%       Thus the set of all $ \PRF $ for $ F $ forms a subset of
		%       all $ \PRF $ for $ \hat{F} $.
	For 2, if $F$ terminates then $\hat{F}$ clearly also terminates since $\hat{F} \models F$.  To show that if $\hat{F}$ terminates then $F$ must also terminate, we 
			            prove by induction that for any $k \geq 0$, $
			            F^{(k+1)} $ terminates implies that $ F^{(k)} $
			            terminates. We show this by arguing that any real execution of $ F^{(k)} $ is also
			            one of $ F^{(k+1)} $. 
               Let $v_0, v_1, \dots $ be an execution of $F^{(k)}$.  It is sufficient to show that
               $v_{i} \rightarrow_{F^{(k+1)}} v_{i+1}$ for all $i$.  Since $v_{i+1} \rightarrow_{F^{(k)}} v_{i+2}$, we must have
               $z(v_{i+1}) = 0$ for all $z \in Z^{(k)}$,
               and so since $F^{(k+1)} = F^{(k)} \land \bigwedge_{z \in Z^{(k)}} z' = 0$ we have $v_{i} \rightarrow_{F^{(k+1)}} v_{i+1}$.
	% \end{enumerate}
\end{proof}

\begin{mexample}
	Consider running \autoref{alg:zero-stable-restrict} on a transition formula $F$
	\[
		F: x = 0 \land y \geq 0 \land y' = -(x')^2 + y - 1 + z' \land z = x' \ .
	\]
	In the first iteration of the loop, we discover a zero consequence $x$ of
	$F$: $F \models x = 0$, and we then constrain the transition formula to be
	$F \land x' = 0$. Now since $F \models z = x'$, we get a new zero consequence $z$: $F
	\land x' = 0 \models z = 0 $. We thus further constrain the transition
	formula to be $F \land x' = 0 \land z' = 0 $. After adding these
	constraints, we can no longer find new zero consequences, and the resulting
	transition formula 
	\[ F \land x' = 0 \land z' = 0 \equiv x = z = x' = z' = 0 \land y' = y - 1 \land y \geq 0 \] is zero-stable.
\end{mexample}

\subsection{Complete Polynomial Ranking Function Synthesis}
\label{sec:prf-synthesis}

Assuming that a transition formula is zero-stable allows us to ignore
polynomials in the ideal $\qideal{Z} = \set{p \in \mathbb{Q}[X] : F \models p =
0}$ when synthesizing polynomial ranking functions, in the following sense. Suppose there exists $r \in \PRF(F)$ a
polynomial ranking function for $F$, where $\tuple{Z,P} = \algorithmfont{consequence}(F, X)$. We can write $r$
as $r = z+p$ with $z \in \qideal{Z}$ and $p \in \cone{P}$.  Since  $F$ is
zero-stable, we have $F \models p' \leq p - 1$, thus some polynomial in $\cone{P}$ is decreasing.  Thus, it is sufficient to search for decreasing
polynomials in $\cone{P}$.
\autoref{alg:prf} computes the complete set of $\PRF$ for zero-stable transition formulas, which is illustrated in the example below.

\begin{algorithm}
	\caption{Computing $ \PRF $ for zero-stable transition formulas. 
	We use notation
		$ p[y \mapsto z_y: y \in S] $ to denote substitution of all variables
		$ y \in S $ with $ z_y $ in a polynomial $ p $.
		\label{alg:prf}}
	\SetKwFunction{Fconseq}{consequence}
	\SetKwFunction{Finvhom}{inverse-hom}
	\SetKwFunction{Flin}{linearize}
	\SetKwFunction{Fintersect}{intersect}
	\SetKwFunction{Fzstableres}{zero-stable-restrict}
	\SetKwFunction{Fprf}{prf-zero-stable}
	\Fn{\Fprf{$F$}}{
		\KwIn{A zero-stable transition formula $F(X, X')$. }
		\KwOut{A tuple $ Z, R, V $ such that $ \algpoly(Z, R, V) = \PRF(F)$ of $F$. }
		$\tuple{Z, P} \gets \texttt{consequence}(F,X)$\;
		$ Y \gets \set{y_p : p \in P}$ be a set of fresh variables\;
		$ f \gets$ the homomorphism $\mathbb{Q}[Y] \rightarrow \mathbb{Q}[X]$ defined by $f(y_p) = p - p'$\;
		$\tuple{R', V'} \gets \Flin{\Finvhom{\Fconseq{$F$, $X\cup X'$}, $f$, $Y$}}$\;
		% $ L \gets \texttt{lin-consequence}(F \land \bigwedge_{p \in P} y_p = p - p', Y) $\;
%   \zak{$C$ and $M$ below look like infinite sets}
		% $ C \gets \set{q \in \mathbb{Q}[Y]: q \text{ has constant term -1}} $\;
  \tcc{$\tuple{R_L,V_L}$ represents the polyhedron of linear terms with positive coefficients for variables and constant coefficient $-1$.}
		$ R_L \gets \set{y : y \in Y}, V_L \gets \set{-1}$\; 
		% $ G \gets \text{generators of polyhedron } \cone{L} \cap
		% \set{q \in \mathbb{Q}[Y]: q(0) = -1}$\; $ R_Y, V_Y
		% \gets \text{rays and vertices in } G $ \;
		$\tuple{R_Y, V_Y} \gets$ \Fintersect{$R'$, $V'$, $R_L$, $V_L$}\;
		% $ G \gets \set{ g \text{ with all } y_p\in Y \text{ substituted with } p : g \in M } $\;
  \tcc{Translate polyhedron from $\mathbb{Q}[Y]^1$ back to $\mathbb{Q}[X]$, and add constant $1$.}
		$ R \gets \set{ r[y_p \mapsto p]_{p \in P} : r \in R_Y} $\;
		$ V \gets \set{ 1 + v[y_p \mapsto p]_{p \in P} : v \in V_Y } $\;
		\Return{$ \tuple{Z, R, V} $}
	}
\end{algorithm}

\begin{mexample}
	Consider running \autoref{alg:prf} on a (zero-stable) transition formula
	\begin{align*}
		F: &\quad\, nx \geq 0 \land n \geq 0 \land n' = n \land x \geq 0 \land z \geq 1\\ 
		&\land ((z' = z - 1 \land x' = x ) \lor (x' = x - 1 \land z' = z + n - 1) ) \ .   
	\end{align*}
 The bounded polynomials of $F$ (Line 1) is the algebraic cone defined by 
  $Z = \set{\emptyset}$ and $P = \set{nx, x,
	n, z -1, 1}$. Let $Y = \set{t_{nx}, t_x, t_n, t_{z-1}, t_1}$ be a fresh set of variables, let $f$ be the ring homomorphism such that
	\begin{align*}
		f(t_{nx}) &= nx - n'x' \\
		f(t_x) &= x - x' \\
		f(t_n) &= n - n' \\
		f(t_{z-1}) &= (z-1) - (z'-1) = z - z' \\
            f(t_1) &= 1 - 1 = 0
	\end{align*}
After Line 5, we obtain the polyhedron of linear polynomials in the inverse image of the nonnegative cone of $F$ under $f$, which is
 defined by the rays $R' = \set{t_n,-t_n,t_1,-t_1, t_{nx}+t_{z-1}-1, 1-t_{nx}-t_{z-1}}$ and one vertex $V'=\set{0}$.
 The subset of $\textit{V-rep}(R',V')$ of polynomials with nonnegative coeffients for variables and constant coefficient -1 (Line 7), is the polyhedron defined by rays $R_Y = \set{t_{nx}, t_x, t_n, t_1}$, and vertices $V_Y = \set{t_{nx} + t_{z-1} - 1}$.   Finally, the algorithm returns the algebraic polyhedron 
 with zeros $Z = \emptyset$, positives
 $R = \set{nx, x, n, 1}$, and vertices
 $V = \set{nx + z}$.  Thus $F$ has a non-empty set of polynomial ranking function modulo $\ThZ$ (e.g., it contains $nx+z$), and so we may conclude that $F$ terminates over the reals.
\end{mexample}

We are then ready to prove the correctness of \autoref{alg:prf}.
\begin{theorem}[Soundness and completeness of \autoref{alg:prf}]
	\label{thm:prf-sound-complete}
	For any zero-stable transition formula $ F $
	\[
		\PRF(F) = \algpoly(\algorithmfont{prf-zero-inv}(F))\ .
	\]
\end{theorem}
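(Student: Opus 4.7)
The plan is to prove the two inclusions of $\PRF(F) = \algpoly(Z, R, V)$, where $(Z, R, V)$ denotes the triple returned by \autoref{alg:prf} on $F$, exploiting the decomposition $\PRF(F) = \textit{Bounded}(F) \cap \textit{Decreasing}(F)$ noted at the start of the section. In both directions the crucial ingredient will be zero-stability, which allows me to discard the ideal contribution $\qideal{Z}$ when reasoning about the decrease condition $r - r' \geq 1$.

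For soundness ($\algpoly(Z, R, V) \subseteq \PRF(F)$), I would take an arbitrary $r = z^* + \sum_i \alpha_i r_i + \sum_j \beta_j v_j \in \algpoly(Z, R, V)$ and trace back through the algorithm to recover preimages $r_i^Y \in R_Y$ and $v_j^Y \in V_Y$ in $\mathbb{Q}[Y]^1$ whose coefficients on each $y_p$ are nonnegative and whose constant terms are $0$ and $-1$ respectively (enforced by intersection with $\textit{V-rep}(R_L, V_L)$ on line 7). Boundedness $F \models r \geq 0$ then follows because each $r_i, v_j$ is a nonnegative combination of elements of $P \subseteq \CnX{F}{X}$. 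For decrease, a telescoping calculation --- in which $z^* - (z^*)'$ vanishes under $F$ by zero-stability and $\sum_j \beta_j = 1$ contributes the $+1$ --- reduces $r - r'$ to $\sum_i \alpha_i f(r_i^Y) + \sum_j \beta_j f(v_j^Y) + 1$, each summand of which lies in $\CnX{F}{X \cup X'}$ by correctness of $\algorithmfont{inverse-hom}$, yielding $F \models r - r' \geq 1$.

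For completeness ($\PRF(F) \subseteq \algpoly(Z, R, V)$), I would start from $r \in \PRF(F)$ and use $\algorithmfont{consequence}$'s guarantee that $r \in \regcone(Z, P) = \CnX{F}{X}$ to decompose $r = z + \sum_p \lambda_p p$ with $z \in \qideal{Z}$ and $\lambda_p \geq 0$. Zero-stability gives $F \models z = 0 = z'$, so $F \models r - r' \geq 1$ becomes $F \models \sum_p \lambda_p (p - p') \geq 1$. The key move is to lift to $\mathbb{Q}[Y]^1$ by defining $y^* \defeq \sum_p \lambda_p y_p - 1$ and verifying $y^* \in \textit{V-rep}(R_Y, V_Y)$: membership in $\textit{V-rep}(R_L, V_L)$ is syntactic from the shape of $y^*$, and membership in $\textit{V-rep}(R', V')$ follows from $f(y^*) \in \CnX{F}{X \cup X'}$ together with correctness of $\algorithmfont{inverse-hom}$ and $\algorithmfont{linearize}$. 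Decomposing $y^* = \sum_i \alpha_i r_i^Y + \sum_j \beta_j v_j^Y$ and substituting $y_p \mapsto p$, the $-1$ on the left cancels with the $-1$ accumulated on the right (which is exactly the purpose of the $+1$ shift in line 9), giving $\sum_p \lambda_p p = \sum_i \alpha_i r_i + \sum_j \beta_j v_j$ and hence $r \in \qideal{Z} + \cone{R} + \conv{V}$.

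The main obstacle I anticipate is the asymmetric bookkeeping of the constant coefficient between rays (constant $0$) and vertices (constant $-1$), which is precisely what motivates the $+1$ correction in line 9 of the algorithm. Apart from this, the substantive assumptions are used minimally and symmetrically: zero-stability is invoked exactly once in each direction to eliminate the ideal component when passing between $r - r'$ and $\sum_p \lambda_p (p - p')$, and the correctness of $\algorithmfont{consequence}$, $\algorithmfont{inverse-hom}$, $\algorithmfont{linearize}$, and $\algorithmfont{intersect}$ is used as a black box to translate statements about $\CnX{F}{X \cup X'}$ and $\CnX{F}{X}$ into statements about the computed polyhedra.
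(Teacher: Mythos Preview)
Your proposal is correct and follows essentially the same strategy as the paper's proof: both directions hinge on zero-stability to discard the $\qideal{Z}$ contribution when passing between $r-r'$ and $\sum_p \lambda_p(p-p')$, and on the observation that $f$ restricted to $\mathbb{Q}[Y]^1$ computes exactly $s(q)-s(q)'$ for the substitution $s:y_p\mapsto p$. The paper packages the argument slightly more abstractly by first characterizing $\textit{V-rep}(R,V)$ as $\{\,s(q): q\in\cone{Y},\ F\models s(q)-s(q)'-1\geq 0\,\}$ and then proving $\PRF(F)=\qideal{Z}+\textit{V-rep}(R,V)$, whereas you work directly with explicit ray/vertex decompositions and track the constant-coefficient bookkeeping by hand; the content is the same.
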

\begin{proof}
	Let $Z, P, Y, f, R', V', R_L, V_L, R_Y, V_Y, R, V $ be as
	in \autoref{alg:prf}.
        Let
        $s: \mathbb{Q}[Y] \rightarrow \mathbb{Q}[X]$ be the homomorphism
        mapping $y_p \mapsto p$ (corresponding to the substitution on lines 8-9).
        Observe that for any
        linear combination of the $Y$ variables $q = \sum_{p \in P} a_p y_p$, we have
        \[f(q) = \sum_{p \in P} a_p f(y_p) = \sum_{p \in P} a_p (p - p') = \left(\sum_{p \in P} a_p p\right) - \left(\sum_{p \in P} a_p p'\right) = s(q) - s(q)'\ .\]
        
        By definition (line 5), we have $\textit{V-rep}(R_Y, V_Y) = \set{ q
          \in \mathbb{Q}[Y]^1 : F \models f(q) \geq 0}$ and (line 6)
        $\textit{V-rep}(R_L,V_L) = \cone{Y} + \set{-1}$.  It follows that the
        intersection of these two polyhedra (line 7) is
        \[ \textit{V-rep}(R_Y,V_Y) = \set{ q - 1 : q \in \cone{Y}, F \models s(q) - s(q)' - 1 \geq 0}\ .\]

        Then by the construction of $R$ and $V$ (lines 7-8) we have
        \[ \textit{V-rep}(R,V) = \set{ s(q) : q \in \cone{Y}, F \models s(q) - s(q)' - 1 \geq 0}\ .\]
        Letting $K = \textit{V-rep}(R,V)$, we must show that $\PRF(F) = \qideal{Z} + K$.  We prove inclusion in both directions.
        \begin{itemize}
        \item[$\subseteq$] Let $r \in \PRF(F)$.  Then we have $F \models r
          \geq 0$ and $F \models r - r' - 1 \geq 0$.  Since $F \models r \geq
          0$ and $\regcone(Z,P) = \CnX{F}{X}$, we have we have $r = z + p$ for
          some $z \in \qideal{Z}$ and $p \in \cone{P}$.  To show $r = z + p
          \in \qideal{Z} + K$, it is sufficient to show $p \in K$.

          Write $p$ as $\left(\sum_{t \in P} c_t t\right)$ for some
          $\set{c_t}_{t \in P} \subseteq \Qplus$.  Let $q = \left(\sum_{t \in
            P} c_t y_t\right)$.  Then we have $s(q) = p$ and $q \in \cone{Y}$,
          so it is sufficient to show that $F \models s(q) - s(q') - 1 \geq
          0$, or equivalently $F \models p - p' - 1$.  Since $F \models z = 0$
          and $F$ is zero-invariant, we have $z' = 0$.  Since $F \models r -
          r' - 1 \geq 0$ by assumption, we have $F \models (z + p) - (z' + p') - 1 \geq 0$
          and so $F \models p - p' - 1 \geq 0$.
        \item[$\supseteq$] Let $r \in \qideal{Z} + K$.  Then we may write $r$
          as $z+k$ for some $z \in \qideal{Z}$ and $k \in K$.  By the
          definition of $K$, we have $k = s(q)$ for some $q \in \cone{Y}$ such
          that $F \models s(q) - s(q)' - 1 \geq 0 \equiv k - k' - 1 \geq 0$.
          Since $q \in \cone{Y}$ we have $k = s(q) \in \cone{P}$ and so $F
          \models k \geq 0$ and thus $F \models z + k \geq 0$, so $r$ is
          bounded.  Since $F \models k - k' - 1 \geq 0$, $F \models z = 0$,
          and $F$ is zero-stable, we have $F \models (k+z) - (k'-z') - 1$, so
          $r$ is decreasing.  Since $r$ is bounded and decreasing, $r \in \PRF(F)$. \qedhere
        \end{itemize}
\end{proof}

\begin{mexample} \label{eg:incompleteness-prf-due-to-lirr}
Even though \autoref{alg:prf} is complete for synthesizing $\PRF$s modulo $\ThZ$, it does not find all 
$\PRF$s with respect to the standard model. Consider
\[
F \defeq x \geq 1 \land y \geq 1 \land ((x' = 2x \land y' = y/2 - 1) \lor (x' = x/2 - 1 \land y' = 2y))
\]
which admits the $\PRF$ $xy$ since $F \models_{\mathbb{R}} xy \geq 1 \land x'y' \leq xy - 1$.
However the algorithm will not find this $\PRF$ since we cannot derive $F \models_{\ThZ} xy \geq 1$ 
due to the fact that $\ThZ$ lacks axioms governing
the relationship between multiplication and the order relation \cite{kincaidWhenLessMore2023}.
\end{mexample}

\subsection{Proving Termination through Polynomial Ranking Functions}

This section shows how to combine the previous two subsections into a end-to-end termination analysis, which is (1) complete in the sense that it succeeds whenever the input formula has a polynomial ranking function, and (2) monotone in the sense that if $F \models G$ and the analysis finds a termination argument for $G$, then it can also find one for $F$.

Our analysis is presented in \autoref{alg:terminate-prf}, which operates by
first
computing a zero-stable formula and then invoking \autoref{alg:prf} to check if
it has at least one polynomial ranking function.

\begin{algorithm}
	\caption{Proving termination through zero-stable restriction
		and $ \PRF $ synthesis. \label{alg:terminate-prf}}
	\SetKwFunction{Ftermprf}{terminate-PRF}
	\SetKwFunction{Fzstableres}{zero-stable-restrict}
	\SetKwFunction{Fprf}{prf-zero-stable}
	\Fn{\Ftermprf{$F$}}{
		\KwIn{An $ \ThZ $ transition formula $ F $.}
		\KwOut{Whether $ F $ admits a $\PRF$. }
		$ \hat{F} = \Fzstableres{$F$} $\;
		$ \_, \_, V = \Fprf{$\hat{F}$} $\;
		\eIf{$ V = \emptyset $ }{
			\Return{unknown}
		}{
			\Return{true}
		}
	}
\end{algorithm}

\begin{theorem}[Completeness] \label{thm:prf-syn-completeness}
	If $ F $ has a polynomial ranking function (modulo $\ThZ$), then \autoref{alg:terminate-prf}
	returns $\true$ on $ F $.
\end{theorem}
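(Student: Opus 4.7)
The plan is to combine the two correctness results established earlier---\autoref{lem:zero-stable-restrict-properties} for the zero-stable restriction and \autoref{thm:prf-sound-complete} for $\texttt{prf-zero-stable}$---to show that a witness $r \in \PRF(F)$ survives both transformations and forces the output set $V$ to be nonempty.

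First, I would observe that \autoref{alg:zero-stable-restrict} only ever strengthens its input by conjoining equations of the form $z' = 0$, so $\hat{F} \defeq \texttt{zero-stable-restrict}(F)$ satisfies $\hat{F} \models F$. Consequently, any $r \in \PRF(F)$, which by definition satisfies $F \models r \geq 0$ and $F \models r' \leq r - 1$, inherits the same entailments under $\hat{F}$; that is, $\PRF(F) \subseteq \PRF(\hat{F})$. So the hypothesis that $F$ admits a polynomial ranking function yields $\PRF(\hat{F}) \neq \emptyset$.

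Second, since $\hat{F}$ is zero-stable by \autoref{lem:zero-stable-restrict-properties}(1), I can apply \autoref{thm:prf-sound-complete} to the call $(Z, R, V) = \texttt{prf-zero-stable}(\hat{F})$ and conclude that $\PRF(\hat{F}) = \algpoly(Z, R, V) = \regcone(Z, R) + \conv{V}$. Combined with the previous paragraph, this gives $\algpoly(Z, R, V) \neq \emptyset$. I then unpack the definition of algebraic polyhedron: $\conv{V}$ is empty exactly when $V$ is empty, $\regcone(Z, R)$ always contains $0$, and the Minkowski sum of any set with the empty set is empty. Therefore $\algpoly(Z, R, V) \neq \emptyset$ forces $V \neq \emptyset$, and \autoref{alg:terminate-prf} takes the $\true$ branch on line 7.

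The argument is essentially bookkeeping that chains the two prior results, so there is no real obstacle. The only step worth flagging is the inclusion $\PRF(F) \subseteq \PRF(\hat{F})$, which requires nothing more than the monotonicity of $\models$ under conjunctive strengthening of the transition formula---a property that is immediate from the structure of \autoref{alg:zero-stable-restrict} rather than from any deeper property of zero-stability.
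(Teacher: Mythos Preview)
Your proposal is correct and mirrors the paper's own proof almost exactly: the paper likewise argues that $\hat{F} \models F$ (via \autoref{lem:zero-stable-restrict-properties}) so $r \in \PRF(\hat{F})$, then invokes \autoref{thm:prf-sound-complete} to place $r$ in $\algpoly(Z,R,V)$ and conclude $V \neq \emptyset$. Your only addition is spelling out the Minkowski-sum reasoning for why nonemptiness of $\algpoly(Z,R,V)$ forces $V \neq \emptyset$, which the paper leaves implicit.
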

\begin{proof}
	Suppose $ F $ has $ r $ as a $ \PRF $. Since $\hat{F}=\algorithmfont{zero-stable-restrict}(F)$ entails $F$ (\autoref{lem:zero-stable-restrict-properties}),
	$ r $ is also a $ \PRF $ of $ \hat{F} $. 
 Letting $\tuple{Z, P, V} = \algorithmfont{prf-zero-inv}(\hat{F}) $, we have $r \in \algpoly(Z,P,V)$
 by \autoref{thm:prf-sound-complete}, and so $V$ is non-empty, and \autoref{alg:terminate-prf} returns true.
\end{proof}

\begin{mexample}
    The reverse of \autoref{thm:prf-syn-completeness} does not hold. Due to zero-stable restriction, \autoref{alg:prf} can even prove termination of loops that 
    do not admit $\PRF$s even in the standard model.
    For example, it can prove termination of $F \defeq x = 0 \land x' \neq 0$ since its zero-stable restriction is
	unsatisfiable. 
    To see that $F$ does not admit any $\PRF$, suppose for a contradiction that it has $r$ as a $\PRF$.
    But this is impossible since there exists $x'$ such that $r(x') > r(0) - 1$ due to the continuity of $r$.
\end{mexample}

% \zak{Reader doesn't expect to see monotonicity here.  Need motivation}
The completeness of the ranking function synthesis procedures leads to several
desirable properties of behavior of the resulting termination analysis, one of
which is \emph{monotonicity}, i.e., if the analysis succeeds on a transition
formula $G$, then it is guaranteed to succeed on a stronger one $F$. Further,
monotone termination analysis on loops can be lifted to monotone whole-program
analysis by the framework presented by Zhu et al.
\cite{zhuTerminationAnalysisTears2021}.

\begin{theoremEnd}{corollary}[Monotonicity]
	If $F \models G$ and \algorithmfont{terminate-PRF}($G$) returns $\true$, then \algorithmfont{terminate-PRF}($F$) returns $\true$.
\end{theoremEnd}
\begin{mproofEnd}
	Let 
	\begin{align*}
		\hat{F} &= \algorithmfont{zero-stable-restrict}(F) \\
		G_0 &= \algorithmfont{zero-stable-restrict}(G) \ .
	\end{align*}
	 Since $\hat{F} \models F$ and $F \models G$, we have $\hat{F} \models G$.  Since $\hat{F} \models G$, $\hat{F}$ is zero-stable, and $G_0$ is the weakest zero-stable formula that entails $G$, we have $\hat{F} \models G_0$. Since 
	 \algorithmfont{terminate-PRF}($G$) = true,  $\PRF(G_0)$ is non-empty, and so $\PRF(\hat{F})$ is non-empty, and thus \algorithmfont{terminate-PRF}($F$) = true.
\end{mproofEnd}

%%% Local Variables:
%%% TeX-master: "main"
%%% End:

% section Synthesizing Polynomial Ranking Functions (end)

\section{Lexicographic Polynomial Ranking for Integer Transitions}
\label{sec:lprf}

In this section, we show how to synthesize lexicographic polynomial ranking
functions. 
The strategy (inspired by \cite{ben-amramRankingFunctionsLinearConstraint2014}) is based on the connection between $\WLPRF$ and 
quasi-ranking functions. 
We can describe the set of quasi-ranking functions as the intersection of
the sets of bounded and non-increasing polynomials of $F$:
\[ \QPRF(F) = \textit{Bounded}(F) \cap \textit{Noninc}(F) \]
where
\begin{align*}
    \textit{Bounded}(F) &\defeq \set{ p \in \mathbb{Q}[X] : F \models p \geq 0}\\
    \textit{Noninc}(F) &\defeq \set { p \in \mathbb{Q}[X] : F \models p \geq p' } \ .
\end{align*}
% If $F$ has an $\WLPRF$ $r_1, \dots, r_n$, then according to 
% definition we have 
In the following, we first show how to synthesize 
$\QPRF$s (Section~\ref{sec:qprf}), using which we are able to synthesize $\WLPRF$s (Section~\ref{sec:plrf-for-termination}) to prove termination.
Similar to Section~\ref{sec:prf}, we need to compute 
zero-stable restriction of transition formulas 
to make sure that the set of ranking arguments found is complete.
% We thus reduce $\WLPRF$ synthesis to a sequence of $\QPRF$ synthesis problems. To
% ensure completeness and monotonicity of the procedure, we also need the
% transition formulas to be zero-stable (similar to Section~\ref{sec:prf}).

\subsection{Synthesizing Polynomial Quasi-Ranking Functions}
\label{sec:qprf}

\autoref{alg:qprf} finds all $ \QPRF $s for a zero-stable transition formula $ F $, using a variation of our strategy for finding $\PRF$s.

\begin{algorithm}
	\caption{Computing $ \QPRF $ for zero-stable transitions. \label{alg:qprf}}
	\SetKwFunction{Fconseq}{consequence}
	\SetKwFunction{Ftermprf}{terminates-WPRF}
	\SetKwFunction{Fzstableres}{zero-inv-restrict}
	\SetKwFunction{Fqprf}{qprf-zero-stable}
	\SetKwFunction{Finvhom}{inverse-hom}
	\SetKwFunction{Flin}{linearize}

	\Fn{\Fqprf{$F$}}{
		\KwIn{A zero-stable transition formula $F \in \mathbf{TF}(X)$. }
		\KwOut{The algebraic cone of all $\QPRF$s of $F$. }
		$\tuple{Z, P} \gets \Fconseq{$F,X$}$\;
		$ Y \gets \set{y_p : p \in P}$ be a set of fresh variables\;
		$ f \gets$ the ring homomorphism $\mathbb{Q}[Y] \rightarrow \mathbb{Q}[X]$ defined by $f(y_p) = p - p'$\;
            \tcc{For any $t \in \cone{R}$, we have $F \models f(t) \geq 0$. }
		$\tuple{R', V'} \gets \Flin{\Finvhom{\Fconseq{$F$, $X\cup X'$}, $f$, $Y$}}$\;
            \tcc{$\tuple{R_L,V_L}$ represents the polyhedron of linear terms with positive coefficients for variables and constant coefficient $0$.}
		$ R_L \gets \set{y : y \in Y}, V_L \gets \set{0}$\; 

		% \tcc{$\textit{V-rep}(R, V) = \set{ p + b : F \models p' \leq p + b : p \in \cone{P} }$}
		% $ L \gets \texttt{lin-consequence}(F \land \bigwedge_{p \in P} y_p = p - p', Y) $\;
		% $ R_L \gets \set{y : y \in Y}, V_L \gets \set{0}$\; 
            \tcc{The intersection of $\textit{V-rep}(R',V')$ and $\textit{V-rep}(R_L,V_L)$ is a cone.}
            $\tuple{R,\_} \gets $ \Fintersect{$R'$, $V'$, $R_L$, $V_L$}\;
  	    $P_X \gets \set{ r[y_p \mapsto p: p \in P]: r \in R}$\;
		\Return{$\tuple{Z,P_X}$}
	}
\end{algorithm}

\begin{theorem}[Soundness and completeness of \autoref{alg:qprf}]
	\label{thm:qprf-sound-complete}
	Suppose $ F $ is a zero-stable transition formula. 
	Then 
	\[ 
		\QPRF(F) = \regcone(\algorithmfont{qprf-zero-stable}(F)) \ .
	\]
\end{theorem}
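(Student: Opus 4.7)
The plan is to adapt the soundness-and-completeness argument of Theorem~\ref{thm:prf-sound-complete} to the non-strict setting. The only essential differences between Algorithm~\ref{alg:qprf} and Algorithm~\ref{alg:prf} are that the constant coefficient in $\tuple{R_L, V_L}$ is $0$ rather than $-1$, that the final translation by $\set{1}$ is dropped, and that the returned object is a cone (a pair $\tuple{Z,P_X}$) rather than an algebraic polyhedron (a triple). Correspondingly, the condition $F \models p - p' - 1 \geq 0$ is everywhere replaced by $F \models p - p' \geq 0$, and the goal becomes $\QPRF(F) = \qideal{Z} + \cone{P_X}$.

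First I would introduce the substitution homomorphism $s : \mathbb{Q}[Y] \to \mathbb{Q}[X]$ defined by $s(y_p) = p$, and record the key identity that, by linearity, $f(q) = s(q) - s(q)'$ for every $q \in \mathbb{Q}[Y]^1$. Combining this with the specifications of \algorithmfont{consequence}, \algorithmfont{inverse-hom}, and \algorithmfont{linearize}, I would argue that $\textit{V-rep}(R',V') = \set{q \in \mathbb{Q}[Y]^1 : F \models f(q) \geq 0}$ and that $\textit{V-rep}(R_L,V_L) = \cone{Y}$, so that after intersection and substitution the algorithm produces $\cone{P_X} = \set{s(q) : q \in \cone{Y},\ F \models s(q) - s(q)' \geq 0}$.

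Next I would establish $\QPRF(F) = \qideal{Z} + \cone{P_X}$ by two inclusions. For $(\subseteq)$, given $r \in \QPRF(F)$, boundedness places $r \in \CnX{F}{X} = \regcone(Z,P)$, so I may decompose $r = z + p$ with $z \in \qideal{Z}$ and $p = \sum_{t \in P} c_t t \in \cone{P}$; the choice $q = \sum_{t \in P} c_t y_t \in \cone{Y}$ satisfies $s(q) = p$, and the required inequality $F \models p - p' \geq 0$ follows from $F \models r - r' \geq 0$ together with $F \models z = 0$ and, by zero-stability, $F \models z' = 0$. For $(\supseteq)$, I would take $r = z + k$ with $k = s(q) \in \cone{P_X} \subseteq \cone{P}$, observe that $F \models r \geq 0$ because $F \models z = 0$ and $F \models k \geq 0$, and that $F \models r - r' \geq 0$ because $F \models k - k' \geq 0$ and $F \models z = z' = 0$.

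The main obstacle, inherited from the $\PRF$ proof, is the use of zero-stability to decouple the ideal component from the cone component of $r$: without it, the ideal part of $r$ could contribute an uncontrolled amount to $r - r'$, so neither inclusion would survive. This dependency is why Section~\ref{sec:zero-stable-restrictions} is invoked as a preprocessing step before Algorithm~\ref{alg:qprf} is applied, and once it is in place the rest of the argument is a routine transcription of the $\PRF$ proof with strict inequalities relaxed to non-strict ones.
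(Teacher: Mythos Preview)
Your proposal is correct and follows essentially the same route as the paper's proof: introduce the substitution map $s$, establish $\cone{P_X} = \set{p \in \cone{P} : F \models p - p' \geq 0}$ via the identity $f(q) = s(q) - s(q)'$, and then verify the two inclusions using zero-stability to neutralize the ideal component. One small imprecision: the identity $f(q) = s(q) - s(q)'$ holds only for \emph{homogeneous} linear $q$ (i.e., zero constant term), not for all of $\mathbb{Q}[Y]^1$, since $f(c) = c$ but $s(c) - s(c)' = 0$ for a constant $c$; this is harmless because you only invoke it on $\cone{Y}$, but you should state it that way.
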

\begin{proof}
  Let $Z, P, Y, f, R, V, P_X$ be as in \autoref{alg:qprf}.
  Let $s: \mathbb{Q}[Y] \rightarrow \mathbb{Q}[X]$ be the homomorphism
  mapping $y_p \mapsto p$ (corresponding to the substitution on line 6),
  and observe that for any
  linear combination of the $Y$ variables $q = \sum_{p \in P} a_p y_p$, we have
  $f(q) = s(q) - s(q)'$ (as in Theorem~\ref{thm:prf-sound-complete}).
  By construction (lines 5-8) we have
  \[\cone{P_X} = \set{ s(q) : q \in \cone{Y}, F \models f(q) \geq 0 }\ ,\]
  which by the above observation can be written equivalently as
  \[\cone{P_X} = \set{ s(q) : q \in \cone{Y}, F \models s(q) - s(q)' \geq 0 }\ .\]
  Since $\set{ s(q) : q \in \cone{Y} }$ is precisely $\cone{P}$, we have 
  \[\cone{P_X} = \set{ p \in \cone{P}: F \models p - p' \geq 0 } \]

  We must show that $\QPRF(F) = \qideal{Z} + \cone{P_X}$.  We prove inclusion in both directions.
  \begin{itemize}
  \item[$\subseteq$] Let $r \in \QPRF(F)$.  Since $F \models r \geq 0$ and $\regcone(Z,P) = \CnX{F}{X}$, we must
    have $r = z + p$ for some $z \in \qideal{Z}$ and $p \in \cone{P}$.  It is
    sufficient to show that $p \in \cone{P_X}$.  Since $F$ is zero-stable and
    $F \models z = 0$, we have $F \models z - z' = 0$ and so we must have $F
    \models p - p' \geq 0$.  It follows from the above that $p \in \cone{P_X}$.
    
  \item[$\supseteq$] Since $\QPRF(F)$ is a cone it is closed under addition,
    so it is sufficient to prove that $\qideal{Z} \subseteq \QPRF(F)$ and
    $\cone{P_X} \subseteq \QPRF(F)$.  Since $F$ is zero-stable, we have
    $\qideal{Z} = \set{ z \in \mathbb{Q}[X] : F \models z = 0} \subseteq
    \QPRF(F)$.  Since $\cone{P_X} = \set{ p \in \cone{P}: F \models p - p'
      \geq 0 }$, we have that each $p \in \cone{P}$ is both bounded ($p \in
    \cone{P}$) and non-increasing ($F \models p - p' \geq 0$), and thus
    belongs to $\QPRF(F)$.  \qedhere
  \end{itemize}
\end{proof}

\subsection{Lexicographic Polynomial Ranking Functions}
\label{sec:plrf-for-termination}

Given \autoref{alg:zero-stable-restrict} for computing zero-stable restrictions
and \autoref{alg:qprf} for finding $ \QPRF $s, we present \autoref{alg:lprf} for
proving termination by finding $\WLPRF$s.

\begin{algorithm}
	\caption{Proving termination by synthesizing lexicographic polynomial ranking functions. \label{alg:lprf}}
	\SetKwFunction{Fsat}{sat-lirr}
	\SetKwFunction{Fconseq}{consequence}
	\SetKwFunction{Ftermprf}{terminates-WPRF}
	\SetKwFunction{Fzstableres}{zero-stable-restrict}
	\SetKwFunction{Fzstableqprf}{qprf-zero-stable}
	\SetKwFunction{Flprf}{terminate-lprf}
	\Fn{\Flprf{$F$}}{
		$Z \gets \emptyset$\;
		\Repeat{$\qideal{Z} = \qideal{Z'}$}{
			$Z' \gets Z$\;
			$\tuple{Z, P} \gets \Fzstableqprf{\Fzstableres{$F$}}$\;
			$F \gets F \land \bigwedge_{z \in Z} z' = z \land \bigwedge_{p \in P} p' = p$\;
		}
		\eIf{$1 \in \qideal{Z}$}{
			\Return{true}
			\tcc{$F$ is unsatisfiable modulo $\ThZ$ iff $1 \in \qideal{Z}$}
		}{
			\Return{unknown}
		}
	}
	
\end{algorithm}

Ignoring the effects of zero-stable restriction, 
Algorithm~\ref{alg:lprf} iteratively computes a sequence of  algebraic cones that represent
all $\QPRF$s, and finally checks if all transitions in $F$ have been ranked.

\begin{mexample}
	Consider the transition formula
	\[ F: x - xy \geq 0 \land y \geq 0 \land ((x' = x \land y' = y - 1) \lor (y \geq 1 \land x' = x - 1 \land y' = y)) \]
	(which has a dimension-2 $\WLPRF$ $\tuple{y, x-xy}$).
		   The following table depicts the execution of \autoref{alg:lprf}, displaying a (simplified) transition formula $F$, zero polynomials $Z$, and positive polynomials $P$ after each
		iteration of the loop,
	 culminating in $F=\false$, which indicates that $F$ terminates.

        %  \zak{This is a little confusing: $P$ is not defined ``after'' zero iterations of the loop}		
		\vspace*{5pt}
		\begingroup
		{\footnotesize
		\centering
		\setlength{\tabcolsep}{15pt}
		\begin{tabular}{m{2cm}m{4cm}m{4cm}m{4cm}}
			\toprule
			\multicolumn{1}{c}{}  & \multicolumn{1}{c}{$F$} & \multicolumn{1}{c}{$Z$} & \multicolumn{1}{c}{$P$} \\
			\midrule
			\multicolumn{1}{c}{Before} & \multicolumn{1}{c}{$\begin{aligned} 
				&\quad\, x - xy \geq 0 \land y \geq 0 \\ 
				&\land ((x' = x \land y' = y - 1) \\
				&\quad\, \lor (y \geq 1 \land x' = x - 1 \land y' = y)) 
			  \end{aligned}$}
			&  \multicolumn{1}{c}{$\emptyset$} & \multicolumn{1}{c}{-} \\
			\midrule
			\multicolumn{1}{c}{Iter 1} & \multicolumn{1}{c}{$\begin{aligned} 
				&\quad\, x - xy \geq 0 \land y \geq 0 \\ 
				&\land (y\geq 1 \land x' = x - 1 \land y' = y)) 
			  \end{aligned}$}
			&  \multicolumn{1}{c}{$\emptyset$} & \multicolumn{1}{c}{$\set{y}$} \\
			\midrule
			\multicolumn{1}{c}{Iter 2} & \multicolumn{1}{c}{$\false$}
			&  \multicolumn{1}{c}{$\emptyset$} & \multicolumn{1}{c}{$\set{y, x-xy}$} \\
			\bottomrule
		\end{tabular}\par
		}
		\endgroup
	\end{mexample}

\begin{theorem}[Correctness of \autoref{alg:lprf}]
	\label{thm:lprf-correctness}
	 \autoref{alg:lprf} is a terminating procedure, and for any transition formula $F$ for which $\algorithmfont{terminate-lprf}(F) = \true$, we have that $F$ terminates
	over the integers.
\end{theorem}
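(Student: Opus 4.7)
The proof has two parts: (i) termination of the algorithm, and (ii) soundness when it returns $\true$.

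For (i), my plan is to show that the ideals $\qideal{Z^{(k)}}$ obtained at successive iterations form an ascending chain in $\mathbb{Q}[X]$, after which Hilbert's basis theorem forces stabilization and the loop exits. Writing $F^{(k)}$ for the value of $F$ at the start of iteration $k$, the update rule gives $F^{(k+1)} \models F^{(k)}$; by the ``weakest zero-stable formula'' property of \algorithmfont{zero-stable-restrict} (\autoref{lem:zero-stable-restrict-properties}(1)), this lifts to $\hat{F}^{(k+1)} \models \hat{F}^{(k)}$. Since $\qideal{Z^{(k)}}$ is exactly the set of polynomials in $\mathbb{Q}[X]$ forced to zero by $\hat{F}^{(k)}$, the chain is ascending, and Noetherianness of $\mathbb{Q}[X]$ yields termination.

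For (ii), my plan is a reverse induction on the iteration count, showing that if $F^{(k+1)}$ terminates over $\mathbb{Z}$ then so does $F^{(k)}$. When the loop exits with $1 \in \qideal{Z^{(n)}}$, $\hat{F}^{(n)}$ is unsatisfiable modulo $\ThZ$ and hence has no models at all, including none over the standard interpretation; by \autoref{lem:zero-stable-restrict-properties}(2) (whose proof applies verbatim to integer executions), $F^{(n)}$ has no integer executions either, and chaining the induction down to $k = 0$ concludes that $F$ terminates over $\mathbb{Z}$.

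The inductive step is where the bulk of the work lies, and will be the main obstacle. Each $p \in P^{(k)}$ is a QPRF of $\hat{F}^{(k)}$ by \autoref{thm:qprf-sound-complete}, and QPRFness is preserved under strengthening, so iterated application of \autoref{lem:quasi-rf-implies-termination} yields that $\hat{F}^{(k)} \land \bigwedge_{p \in P^{(k)}} p' = p$ terminating over $\mathbb{Z}$ implies $\hat{F}^{(k)}$ does. The subtle part is bridging $F^{(k+1)}$ back to the formula on the left: I would use $\hat{F}^{(k)} \models F^{(k)}$ together with the observation that, for each $z \in Z^{(k)}$, zero-stability plus $\hat{F}^{(k)} \models z = 0$ gives $\hat{F}^{(k)} \models z' = z$. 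Combining these, $\hat{F}^{(k)} \land \bigwedge_{p \in P^{(k)}} p' = p \models F^{(k+1)}$, so any integer execution of the left-hand side is an integer execution of $F^{(k+1)}$. The contrapositive then closes the chain: $F^{(k+1)}$ terminates $\Rightarrow \hat{F}^{(k)} \land \bigwedge_p p' = p$ terminates $\Rightarrow \hat{F}^{(k)}$ terminates $\Rightarrow F^{(k)}$ terminates.
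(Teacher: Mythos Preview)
Your proposal is correct and follows essentially the same strategy as the paper's proof: an ascending-chain argument via Hilbert's basis theorem for termination of the procedure, and induction on the iteration count using \autoref{lem:zero-stable-restrict-properties} and \autoref{lem:quasi-rf-implies-termination} for soundness. The only minor difference is that the paper bundles all of $P^{(k)}$ into a single QPRF $r = \sum_{p \in P^{(k)}} p$, applies \autoref{lem:quasi-rf-implies-termination} once, and then argues that a sum of nonpositive terms equaling zero forces each $p$ to be individually unchanged, whereas you iterate the lemma once per generator---this is a cosmetic difference.
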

\begin{proof}
	Let $ F^{(k)}, Z^{(k)}, P^{(k)} $ denote the values of $ F $, $ Z $, and $P$ at the
	beginning of $ k $-th iteration of the loop in \autoref{alg:lprf}. We first
	prove termination of the algorithm. Suppose the loop does not terminate,
	then $\qideal{Z^{(k+1)}} \supsetneq \qideal{Z^{(k)}}$ for all iterations
	$k$. We have thus obtained an infinite and strictly ascending chain of
	ideals in the polynomial ring $ \mathbb{Q}[X \cup X'] $, contradicting Hilbert's basis theorem.

	Now we show that if \autoref{alg:lprf} returns $\true$, all integer
 executions of $F$ terminate. We prove this by induction on $n$, the number of
 times the loop runs in Algorithm~\ref{alg:lprf}. The base case holds when $n =
 1$ since the zero-stable restriction of $F$ being unsatisfiable modulo $\ThZ$
 implies that $F$ terminates. Suppose that the proposition is true for $n \geq
 1$ and we want to prove the case of $(n+1)$. Consider the first iteration of
 the loop in Algorithm~\ref{alg:lprf}. For convenience, we use $F$ to denote
 $F^{(1)}$, $\hat{F}$ to denote the zero-stable restriction of $F$, and $F'$ to
 denote $F \land \bigwedge_{z \in Z^{(1)}} z' = z \land \bigwedge_{p \in
 P^{(1)}} p' = p$. By the inductive hypothesis, $F'$ terminates. 
 Suppose for a contradiction that $F$ does not terminate.
 By \autoref{lem:zero-stable-restrict-properties} we know $\hat{F}$ also does 
 not terminate. 
 Define $r = \sum_{p \in P^{(1)}} p$, then $r \in \QPRF(\hat{F})$ due to
 \autoref{thm:qprf-sound-complete}. By \autoref{lem:quasi-rf-implies-termination},
 $\hat{F} \land r' = r$ has an infinite integer execution $x_0, x_1, \dots$ since 
 $\hat{F}$ has one.
 Let $i \in \mathbb{N}$ be arbitrary.
 Since $x_i \rightarrow_{\hat{F} \land r' = r} x_{i+1}$,
 we know that $ \sum_{p \in P^{(1)}} p(x_{i+1}) - p(x_i) =
 0$. This is a sum of nonpositive terms because $p(x_{i+1}) \leq p(x_i)$ holds
 for any $p \in P^{(1)}$ due to $p \in
 \QPRF(\hat{F})$. Thus for all $p \in P^{(1)}$, it holds that $p(x_{i+1}) = p(x_i)$.
 Since $\hat{F}$ is zero-stable, we have that $z(x_{i+1}) = z(x_i) = 0$ for all $z \in
 Z^{(1)}$. Thus we have $x_i \rightarrow_{F'} x_{i+1}$
 and subsequently  $x_0, x_1, \dots$ is an infinite integer execution of $F'$,
 contradicting the inductive hypothesis that $F'$ terminates. 
\end{proof}

The following theorem states that even though we operate modulo $\ThZ$, we have a guarantee on the capability
of the ranking functions synthesized that it is no less powerful
than LLRF modulo the standard linear integer arithmetic, under mild assumptions.

\begin{theorem}[Subsumption of LLRFs]
	\label{thm:lprf-subsumes-llrf}
	If $F \in \mathbf{TF}(X)$ is a negation-free formula involving
        only linear polynomials and $F$ has an LLRF modulo linear integer arithmetic ($\LIA$), 
        then $F \land \bigwedge_{x \in (X \cup X')} \Int(x)$ has a $\WLPRF$ modulo $\ThZ$.
\end{theorem}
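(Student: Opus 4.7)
The plan is to take an LLRF $\langle r_1, \ldots, r_n \rangle$ for $F$ modulo $\LIA$ (in the sense of a weak LLRF, per Ben-Amram--Genaim \cite{ben-amramRankingFunctionsLinearConstraint2014}) and show that the same tuple of linear polynomials serves as a $\WLPRF$ modulo $\ThZ$ for $F \land \bigwedge_{x \in X \cup X'} \Int(x)$. First I would unpack the definition of a weak LLRF: for each $i$, the linear polynomial $r_i$ is non-negative and non-increasing on $F \land \bigwedge_{j < i} r_j' = r_j$ modulo $\LIA$, and the final restriction $F \land \bigwedge_{i=1}^{n} r_i' = r_i$ is unsatisfiable over the integers. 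These conditions are structurally identical to the $\WLPRF$ requirements from Section~\ref{sec:background}, with only the underlying entailment relation changed from $\models_{\LIA}$ to $\models_{\ThZ}$.

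The key ingredient is a transfer lemma: for a negation-free linear transition formula $F$ and a linear polynomial $p \in \mathbb{Q}[X \cup X']^1$ with all free variables constrained by $\Int(\cdot)$, we have
\[
F \models_{\LIA} p \geq 0 \quad \Longleftrightarrow \quad F \land \bigwedge_{x \in X \cup X'} \Int(x) \models_{\ThZ} p \geq 0,
\]
and likewise for unsatisfiability. One direction is immediate from the soundness of $\ThZ$: the standard integer model is itself a linear integer/real ring, hence any $\ThZ$-entailment is an $\LIA$-entailment. The other (and essential) direction relies on the fact that $\ThZ$'s axiomatization fully captures linear integer/real ring theory; its incompleteness relative to the standard model manifests only in the interaction of multiplication with the order predicate (as illustrated in \autoref{eg:incompleteness-prf-due-to-lirr}), which never comes into play when both the hypothesis and the conclusion are linear in the variables. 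Crucially, the hypothesis that $F$ is negation-free is needed so that no disequalities or negated integer predicates introduce case splits that $\ThZ$ cannot handle; under this restriction, $F$ can be decomposed into a disjunction of conjunctions of linear (in)equalities and integrality assertions, where $\ThZ$-reasoning and $\LIA$-reasoning coincide.

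Once the transfer lemma is in hand, the proof is essentially a direct translation: each of the finitely many entailments that define the weak LLRF for $F$ over $\LIA$ carries over verbatim to an entailment modulo $\ThZ$ after conjoining the $\Int(\cdot)$ constraints, yielding that $\langle r_1, \ldots, r_n \rangle$ is a $\WLPRF$ of dimension $n$ for $F \land \bigwedge_{x \in X \cup X'} \Int(x)$ modulo $\ThZ$. The main obstacle will be establishing the transfer lemma precisely; this requires a careful appeal to the axiomatic treatment of $\ThZ$ in \cite{kincaidWhenLessMore2023}, showing in particular that the restriction to negation-free linear formulas over integer-constrained variables collapses $\ThZ$-entailment to $\LIA$-entailment for linear-inequality conclusions.
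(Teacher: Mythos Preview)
Your overall strategy matches the paper's exactly: take the LLRF $\langle r_1,\dots,r_n\rangle$ and verify that the same tuple satisfies the $\WLPRF$ conditions modulo $\ThZ$ by transferring each linear-inequality entailment from $\LIA$ to $\ThZ$. You have also correctly isolated the transfer lemma as the only non-trivial step.

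Where your proposal falls short is in the justification of that lemma. The claim that ``$\ThZ$'s incompleteness manifests only in the interaction of multiplication with the order predicate'' is an intuition, not an argument, and it does not by itself establish that every $\LIA$-valid linear inequality over a negation-free linear hypothesis is $\ThZ$-derivable. The paper supplies the missing mechanism: after reducing (by distributivity over $\lor$) to a conjunctive $F$, one invokes the completeness of Gomory--Chv\'atal cutting-plane proofs for integer linear programming \cite{schrijverCuttingPlanesResearch1980,chvatalEdmondsPolytopesHierarchy2006}. Any $\LIA$-entailment $F \models_{\LIA} r \geq 0$ with $F$ a conjunction of linear inequalities and $r$ linear admits a finite cutting-plane derivation, and each cutting-plane inference rule (nonnegative combination, rounding of an integer-valued term) is valid in every model of $\ThZ$ once the $\Int(\cdot)$ constraints are present. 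This is the content of (the analogue of) Theorem~8 in \cite{kincaidWhenLessMore2023}, which you cite but do not unpack. With this tool in hand, your final paragraph goes through verbatim; without it, the transfer lemma remains an assertion.
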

\begin{proof}
	We first prove a lemma as follows. Let $F(Y)$ be a ground, negation-free,
	$\LIA$ transition formula over variable set $Y$. Then for any affine term
	$r$ over $Y$, if $F \models_{\LIA} r \geq 0$ then $F \land \bigwedge_{y \in
	Y} \Int(y) \lirrmodels r \geq 0$. (The proof is similar to Theorem 8 in
	\cite{kincaidWhenLessMore2023}. Without loss of generality we assume $F$ is
	a conjunctive formula. Suppose $F \models_{\LIA} r \geq 0$. By \cite{schrijverCuttingPlanesResearch1980,chvatalEdmondsPolytopesHierarchy2006} there is
	a cutting-plane proof of $r \geq 0$ from $F$. Since each inference rule in a
	cutting-plane proof is valid $\ThZ$, we have that  $F \land \bigwedge_{y \in Y} \Int(y)
	\lirrmodels r \geq 0$.)
	
	Suppose that $\LIA$ formula $F$ admits an LLRF $r_1, \dots, r_n$ of
	dimension $n$, then $F \models_{\LIA} r_i \geq 0$ for each $i$ (bounded), $F
	\land \bigwedge_{j=1}^{i-1} r_j' = r_j \models_{\LIA} r_i' \leq r_i$ for
	each $i$ (decreasing), and $F \land \bigwedge_{j=1}^n r_j' = r_j
	\models_{\LIA} \false $ (coverage). Since the left hand side of all the
	implications listed above contains ground linear formulas without negation
	and the right hand side all contains linear inequalities (with $\false$
	being interpreted as $0 \leq -1$), these implications also hold modulo
	$\ThZ$ by the lemma. Therefore, $r_1, \dots, r_n$ is a $\WLPRF$ of $F$.
 \end{proof}

 \autoref{alg:lprf} is also complete w.r.t. the existence of $\WLPRF$s, since it finds
 a $\WLPRF$ if there exists one for the transition formula. Moreover, it is optimal in terms 
 of the dimension of the $\WLPRF$ found.
\begin{theorem}[Completeness of \autoref{alg:lprf} w.r.t. $\WLPRF$]
	\label{thm:lprf-completeness}
        If a transition formula $ F $ admits a $ \WLPRF $ of dimension $N$,
	then  $\algorithmfont{termination-lprf}(F)$ returns $\true$ and \autoref{alg:lprf}
        terminates in no more than $N$ iterations.
\end{theorem}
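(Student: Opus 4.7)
The plan is to argue by induction on $k$ that after $k$ iterations of the loop in \autoref{alg:lprf}, the zero-stable restriction $\hat{F}^{(k)}$ of the current value of $F$ entails $\bigwedge_{j=1}^{k} r_j' = r_j$. Taking $k = N$ and combining the invariant with the $\WLPRF$ coverage condition $F \land \bigwedge_{j=1}^{N} r_j' = r_j \models \false$ yields that $\hat{F}^{(N)}$ is unsatisfiable modulo $\ThZ$; consequently, the next invocation of $\algorithmfont{qprf-zero-stable}$ on $\hat{F}^{(N)}$ returns generators $Z$ with $1 \in \qideal{Z}$, and the algorithm exits through the \textbf{until} clause and returns $\true$. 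The base case ($k=0$) is vacuous.

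For the inductive step, assume $\hat{F}^{(k-1)} \models \bigwedge_{j=1}^{k-1} r_j' = r_j$. Since $F^{(k)} \models F^{(k-1)}$ and $\hat{F}^{(k)}$ is a zero-stable formula entailing $F^{(k-1)}$, \autoref{lem:zero-stable-restrict-properties} gives $\hat{F}^{(k)} \models \hat{F}^{(k-1)}$, so the earlier equalities are inherited by $\hat{F}^{(k)}$. For the new equality $r_k' = r_k$: the $\WLPRF$ definition yields $r_k \in \QPRF(F \land \bigwedge_{j=1}^{k-1} r_j' = r_j)$, and the inductive hypothesis together with $\hat{F}^{(k-1)} \models F$ implies $r_k \in \QPRF(\hat{F}^{(k-1)})$. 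By \autoref{thm:qprf-sound-complete}, $\QPRF(\hat{F}^{(k-1)}) = \qideal{Z^{(k)}} + \cone{P^{(k)}}$, where $Z^{(k)}$ and $P^{(k)}$ are the sets computed at iteration $k$, so $r_k = z_0 + p_0$ with $z_0 \in \qideal{Z^{(k)}}$ and $p_0 \in \cone{P^{(k)}}$. Zero-stability of $\hat{F}^{(k-1)}$ forces $z_0 = z_0' = 0$, while the algorithm's update $p' = p$ for each $p \in P^{(k)}$ forces $p_0' = p_0$ in $F^{(k)}$. Evaluating $r_k' - r_k = (z_0' - z_0) + (p_0' - p_0)$ in $\hat{F}^{(k)}$, which entails both $\hat{F}^{(k-1)}$ and $F^{(k)}$, yields $0$.

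The main obstacle is keeping the semantics of the syntactic additions $z' = z$ and $p' = p$ aligned with the semantic goal $r_k' = r_k$: the $p$-component of the decomposition $r_k = z_0 + p_0$ is handled directly by the added $p' = p$ constraints, but the $z$-component requires the stronger fact $z_0 = z_0' = 0$, which only holds in the zero-stable restriction (since the algorithm never adds $z = 0$ itself, only $z' = z$). \autoref{lem:zero-stable-restrict-properties}'s characterization of $\hat{F}^{(k)}$ as the weakest zero-stable formula entailing $F^{(k)}$ is precisely what enables $\hat{F}^{(k)}$ to inherit these zero-facts from $\hat{F}^{(k-1)}$ across iterations, closing the induction and bounding the number of iterations by $N$.
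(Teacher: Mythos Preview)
Your argument is correct and follows essentially the same inductive strategy as the paper---you track the invariant $\hat{F}^{(k)} \models \bigwedge_{j\le k} r_j' = r_j$ in place of the paper's $r_i \in \regcone(Z^{(i)},P^{(i)})$, and your treatment of the ideal component via zero-stability (noting that one needs $\hat{F}^{(k)}$ rather than $F^{(k)}$ to kill $z_0$ and $z_0'$) is in fact more careful than the paper's.

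One small gap concerns the iteration bound. Your invariant yields that $\hat{F}^{(N)}$ is unsatisfiable, but $\hat{F}^{(N)}=\Fzstableres(F^{(N)})$ is the formula computed in iteration $N{+}1$, so ``the next invocation'' already overshoots the claimed bound of $N$; moreover, the \textbf{until} test $\qideal{Z}=\qideal{Z'}$ compares against the \emph{previous} $Z$, which need not yet contain $1$, so exit may be delayed by a further iteration. Thus you establish that $\algorithmfont{termination-lprf}(F)$ returns $\true$, but not literally ``in no more than $N$ iterations.'' (The paper's own proof exhibits the same off-by-one looseness, so this is not a defect particular to your approach.)
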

\begin{proof}
	Suppose that $ r_1, \dots, r_N $ is a $\WLPRF$ for $F$.  Let $F^{(k)}$ denote the value of $F$ after the $k$th iteration of the while loop in \autoref{alg:lprf}, with the convention that if the loop exits after $m$ iterations then $F^{(m)} = F^{(m+1)} = \dotsi$.  For any $k$, let
 $\tuple{Z^{(k)},P^{(k)}} \defeq \Fzstableqprf(\Fzstableres(F^{(k)}))$.
 
 We prove that $r_i \in \regcone(Z^{(i)},P^{(i)})$ for all $i$, by induction on $i$.
 For the base case, $r_1$ is a quasi ranking function for $F$ and so also a quasi ranking function for the zero-stable restriction $F$, and thus $r_1 \in \regcone(Z^{(1)},P^{(1)})$.
 For the inductive step, we have $r_j \in\regcone(Z^{(j)},P^{(j)})$ for all $j \leq i$, and we must prove  $r_{i+1} \in\regcone(Z^{(i+1)},P^{(i+1)})$.  By the inductive hypothesis, we have $r_1,\dots,r_i \in \regcone(Z^{(i)},P^{(i)})$.  It follows that
 $F^{(i+1)} \models F^{(i)} \land \bigwedge_{j=1}^i r_j' - r_j$, so by the (Decreasing) condition of $\WLPRF$, $r_{i+1}$ is a quasi ranking function of $F^{(i)}$.  It follows that $r_{i+1}$ is a 
 quasi ranking function of 
 $\Fzstableres(F^{(i)})$, and thus $r_{i+1}$ belongs to 
 $\regcone(Z^{(i+1)},P^{(i+1)})$.

 By the (Coverage) condition of $\WLPRF$, we have
 that $F \land \bigwedge_{j=1}^N r_j' = r_j$ is unsatisfiable.  Since for each $j$ we have
 \[ r_j \in \regcone(Z^{(j)},P^{(j)}) \subseteq \regcone(Z^{(N)},P^{(N)}) \] we must have that
 $F^{(N)}$ is unsatisfiable, and so $F^{(N)} \models 1 = 0$, and thus $\algorithmfont{termination-lprf}(F)$ returns $\true$.
\end{proof}

\begin{theoremEnd}{corollary}[Monotonicity of \autoref{alg:lprf}]
	\label{thm:lprf-monotone}
 Let $F$ and $G$ be transition formulas with $ F \models G $. If  $
	\algorithmfont{termination-lprf}(G) = \true $, then it is guaranteed that $
	\algorithmfont{termination-lprf}(F) = \true $.
\end{theoremEnd}
\begin{mproofEnd}
	We prove by induction that at the start of each iteration in
	\autoref{alg:lprf}, we always have $F^{(k)} \models G^{(k)}$. The base case
	$ F^{(0)} = F \models G = G^{(0)}$ is trivial. Now assume the statement is
	true for $F^{(k)} \models G^{(k)}$. Then their zero-stable restrictions $\hat{F}
	\models G_0$ by \autoref{lem:zero-stable-restrict-properties}. 
	Thus $\QPRF(\hat{F}) \supseteq \QPRF(G_0)$. Now suppose that
	\begin{align*}
		F^{(k+1)} &= F^{(k)} \land \bigwedge_{z_i \in Z} z_i' = z_i \land
		\bigwedge_{p_j \in P} p_j' = p_j \\
		G^{(k+1)} &= G^{(k)} \land \bigwedge_{n_k \in N} n_k' = n_k \land
		\bigwedge_{m_l \in M} m_l' = m_l
	\end{align*}
	For each $n_k$ or $m_l$, it is contained in the set $\QPRF(G_0)$ and therefore 
	is also contained in the set $\QPRF(\hat{F})$. Since all $\QPRF$ of $\hat{F}$ can be 
	written as $\sum_{z_i \in Z} f_i z_i + \sum_{p_j \in P} \lambda_j p_j, 
	f_i \in \mathbb{Q}[X], \lambda_j \in \Qplus$, so does any $n_k$ or $m_l$,
	Thus $\bigwedge_{z_i \in Z} z_i' = z_i \land
	\bigwedge_{p_j \in P} p_j' = p_j \models n_k' = n_k \land m_l' = m_l$ for any 
	$n_k$ and $m_l$. And consequently $F^{(k+1)} \models G^{(k+1)}$.
	Given this, if \autoref{alg:lprf} returns $\true$ for $G$, then for 
	some $G^{(t)}$ it is unsatisfiable modulo $\ThZ$. But $F^{(t)} \models G^{(t)}$
	so it must also be unsatisfiable modulo $\ThZ$, and the algorithm will 
	also return $\true$ on input $F$.
\end{mproofEnd}

\section{Evaluation} % (fold)
\label{sec:evaluation}

We consider two key research questions in the experimental evaluation.
First, how does the proposed technique perform in proving termination
of linear or nonlinear programs comparing to existing tools, in terms
of running time and the number of tasks solved.
We thus compare the proposed techniques with other \emph{sound} and \emph{static}
provers for termination. In particular, we compare against Ultimate
Automizer
\cite{heizmannUltimateAutomizerCommuHash2023,heizmannRefinementTraceAbstraction2009}
and 2LS \cite{chenBitPreciseProcedureModularTermination2018}, which are the top
two \emph{sound} tools in the \texttt{Termination} category in the 12th
Competition on software verification (SV-COMP 2023).
We also report a
qualitative comparison with the dynamic tool DynamiTe
\cite{leDynamiTeDynamicTermination2020}.
Second, we have shown in \autoref{thm:lprf-subsumes-llrf} that LPRF
subsumes LLRF synthesis for proving termination under certain assumptions, 
but we would like to understand the performance overhead of our more
general procedure.
We  compare with the
LLRF synthesis procedure implemented in ComPACT
\cite{zhuTerminationAnalysisTears2021}.

\paragraph{Implementation} We implement polynomial ranking functions synthesis (Section~\ref{sec:prf}) and
 lexicographic polynomial ranking function synthesis (Section~\ref{sec:lprf}) as
two mortal precondition operators (i.e., an operator that takes in a transition formula
representing a single loop and outputs sufficient terminating conditions for that loop)
in the ComPACT termination analysis framework
\cite{zhuTerminationAnalysisTears2021}, also utilizing the $\ThZ$ solver,
consequence finding, inverse homomorphism, and nonlinear invariant generation
procedures from Kincaid et al. \cite{kincaidWhenLessMore2023}. Given any loop,
we first try synthesizing polynomial ranking functions, and only attempt to
synthesize lexicographic polynomial ranking function upon failure. Our
implementation is denoted by ``LPRF'' in the tables. We have also combined our technique 
with phase analysis, a technique for improving termination analyzers by 
analyzing phase transition structures implemented in 
ComPACT \cite{zhuTerminationAnalysisTears2021}.

\paragraph*{Environment.}
We ran all experiments in a virtual machine with Lubuntu 22.04 LTS (kernel
version 5.12) with a CPU of Intel Core i7-9750H @ 2.60 GHz and 8 GB of memory.
The SV-COMP 2023 binaries of Ultimate Automizer v0.2.2-2329fc70 and 2LS version
0.9.6-svcomp23 are used in the experiments. All tools were run under a time
limit of 2 minutes.

\paragraph*{Benchmarks.}
We collected tasks from the SV-COMP 2023 \texttt{Termination} benchmarks.
Since the focus of the proposed technique is to prove termination of nonlinear
programs, we divide the tasks into two suites according to whether they require
nonlinear reasoning. The \texttt{linear} suite consists of \emph{terminating}
and \emph{nonrecursive} integer programs from the
\texttt{Termination-MainControlFlow} subcategory in the SV-COMP, excluding the \texttt{termination-nla} folder. The
\texttt{nonlinear} suite contains terminating programs without
overflow\footnote{Our technique assumes unbounded integers but 2LS is
    bit-precise and requires this constraint.} in the \texttt{termination-nla}
folder. This suite was originally presented in
\cite{leDynamiTeDynamicTermination2020} and contains only integer programs.

\begin{figure}[ht]
    \centering
    \begin{minipage}[t]{0.45\textwidth}
        \vspace{-125pt}
        \captionof{table}{Experimental results on termination verification
            benchmarks comparing our technique (LPRF) with lexicographic linear
            ranking function (LLRF) synthesis, both techniques with phase
            analysis (+$\Phi$), as well as 
            ComPACT, Ultimate Automizer, and 2LS. The \#c row counts the number of
            solved tasks, t reports total running time in seconds, excluding timeouts (\# timeouts in parentheses). \label{tab:eval-static}}
        \footnotesize
        \scalebox{0.9}{
            \begin{tabular}{cccc}
                \toprule
                \multicolumn{1}{l}{}         & \multicolumn{1}{l}{}  & \multicolumn{1}{c}{\texttt{linear}}   & \multicolumn{1}{c}{\texttt{nonlinear}} \\
                \multicolumn{2}{c}{\#tasks}  & 171                   & 26                                                                             \\
                \midrule
                \multirow{2}{*}{LPRF}        & \#c                   & 118                                   & \textbf{17}                            \\
                                             & \multicolumn{1}{r}{t} & \multicolumn{1}{r}{333.2 (2)}         & \multicolumn{1}{r}{\textbf{47.8} (0)}  \\
                \midrule
                \multirow{2}{*}{LPRF+$\Phi$} & \#c                   & 132                                   & \textbf{17}                            \\
                                             & \multicolumn{1}{r}{t} & \multicolumn{1}{r}{426.0 (2)}         & \multicolumn{1}{r}{119.5 (0)}          \\
                \midrule
                \multirow{2}{*}{LLRF}        & \#c                   & 120                                   & 3                                      \\
                                             & \multicolumn{1}{r}{t} & \multicolumn{1}{r}{\textbf{74.6} (0)} & \multicolumn{1}{r}{161.1 (1)}          \\
                \midrule
                \multirow{2}{*}{LLRF+$\Phi$} & \#c                   & 138                                   & 4                                      \\
                                             & \multicolumn{1}{r}{t} & \multicolumn{1}{r}{98.6 (0)}          & \multicolumn{1}{r}{263.0 (2)}          \\
                \midrule

                \multirow{2}{*}{ComPACT}     & \#c                   & 140                                   & 4                                      \\
                                             & \multicolumn{1}{r}{t} & \multicolumn{1}{r}{105.7 (0)}         & \multicolumn{1}{r}{288.8 (2)}          \\
                \midrule
                \multirow{2}{*}{UAutomizer}  & \#c                   & \textbf{160}                          & 1                                      \\
                                             & \multicolumn{1}{r}{t} & \multicolumn{1}{r}{2423.1 (6)}        & \multicolumn{1}{r}{1282.7 (8)}         \\
                \midrule
                \multirow{2}{*}{2LS}         & \#c                   & 114                                   & 0                                      \\
                                             & \multicolumn{1}{r}{t} & \multicolumn{1}{r}{5399.1 (43)}       & \multicolumn{1}{r}{2748.7 (20)}        \\
                \bottomrule
            \end{tabular}
        }
    \end{minipage}
    \begin{minipage}[t]{0.45\textwidth}
        \centering
        % \begin{figure}[H]
        \resizebox{\columnwidth}{!}{%
            \begin{tikzpicture}
                \begin{axis}[
                        xlabel={Instances Solved},
                        ylabel={Time (s)},
                        grid=major,
                        legend entries={LPRF,LPRF+$\Phi$,LLRF,LLRF+$\Phi$,ComPACT,UAutomizer,2LS},
                        legend pos=north west,
                        mark size=1.5pt,
                        xmax=171,
                    ]
                    \addplot table {figures/linear/compact-prf-no-phase.dat};
                    \addplot table {figures/linear/compact-prf.dat};
                    \addplot table {figures/linear/ComPACT-llrf-only-no-phase.dat};
                    \addplot table {figures/linear/ComPACT-llrf-only-with-phase.dat};
                    \addplot table {figures/linear/ComPACT.dat};
                    \addplot table {figures/linear/2ls.dat};
                    \addplot table {figures/linear/uautomizer.dat};
                \end{axis}
            \end{tikzpicture}
        }
        \captionof{figure}{Linear benchmarks. \label{fig:cactus-lin}}
        \vspace{20pt}
        % \end{figure}

        % \begin{figure}[H]
        \resizebox{\columnwidth}{!}{%
            \begin{tikzpicture}
                \begin{axis}[
                        xlabel={Instances Solved},
                        ylabel={Time (s)},
                        grid=major,
                        legend entries={LPRF,LPRF+$\Phi$,LLRF,LLRF+$\Phi$,ComPACT,UAutomizer},
                        legend style={at={(0.6, 0.8)},anchor=north west},
                        mark size=1.5pt,
                        xmax=30,
                    ]
                    \addplot table {figures/nonlinear/compact-prf-no-phase.dat};
                    \addplot table {figures/nonlinear/compact-prf.dat};
                    \addplot table {figures/nonlinear/ComPACT-llrf-only-no-phase.dat};
                    \addplot table {figures/nonlinear/ComPACT-llrf-only-with-phase.dat};
                    \addplot table {figures/nonlinear/ComPACT.dat};
                    \addplot table {figures/nonlinear/uautomizer.dat};
                \end{axis}
            \end{tikzpicture}
        }
        \captionof{figure}{Nonlinear benchmarks. 2LS cannot solve any task in the suite and is thus omitted 
        in the plot. \label{fig:cactus-nonlin}}
        % \caption{Linear benchmarks.}
        % \end{figure}
    \end{minipage}

\end{figure}

\paragraph*{Comparing against sound and static analyses.}
The results of running all experiments are presented in
Table~\ref{tab:eval-static}. For the \texttt{nonlinear} suite, our proposed
techniques for synthesizing polynomial ranking functions and lexicographic
ranking arguments perform significantly better than the current static analysis
tools in terms of both number of tasks proved and running speed. Our
technique subsumes linear lexicographic ranking function synthesis for a large class of
integer variable programs, and thus remains competitive for the \texttt{linear} suite.
We see that there is a moderate slowdown comparing to linear
lexicographic ranking function synthesis implemented in ComPACT. As a top
competitor in the SV-COMP, Ultimate Automizer proves the most tasks in the
\texttt{linear} suite, while requiring more time to run compared to our
techniques (see the cactus plots Fig~\ref{fig:cactus-lin} and
\ref{fig:cactus-nonlin}).

% \zak{The numbers add up for LPRF /LLRF (118 + 2 timeouts = 120), not not for phase analysis.  Is the explanation that phase analysis is weaker modulo LIRR than LIRA?}
% \begin{table}[htb]
%     \caption{Comparing against DynamiTe and Ultimate Automizer on a subset of
%     linear benchmarks and the nonlinear benchmarks. For DynamiTe, the numbers
%     are the ranking functions that can be automatically validated, and the
%     numbers in the bracket are the ranking functions learned but cannot be
%     automatically validated. \label{tab:eval-dynamite}}
%     \centering
%     \begin{tabular}{ccc}
%         \toprule
%          & \texttt{termination-crafted-lit} & \texttt{termination-nla} \\
%         \#tasks & 53 & 26 \\
%         \midrule
%         LPRF+$\Phi$ & 48 & 17 \\
%         \midrule
%         DynamiTe & 38(50) & 7(23) \\
%         \midrule
%         UAutomizer & 51 & 1 \\
%         \bottomrule
%     \end{tabular}
% \end{table}
% \zak{Should include \# proposed and \# verified by DynamiTe}

\paragraph*{Comparing against DynamiTe.} The DynamiTe paper
\cite{leDynamiTeDynamicTermination2020} presents a \emph{dynamic} technique that
can guess and verify linear or quadratic ranking functions for nonlinear
programs and proposes a benchmark suite \texttt{termination-nla}
for termination of nonlinear programs. Due to hardware constraints, we could not reproduce the original
evaluations for DynamiTe in our evaluation environment.
% (the experiments were run
% on a 20-core Intel i7-6950X@3.00GHz platform). 
Instead, we perform a comparison
with the results reported in the paper. Since our tool
is automated and sound but can only prove termination, we only count the
terminating programs for which DynamiTe can automatically validate the discovered ranking
functions.
In the \texttt{termination-nla} suite,
DynamiTe can learn the ranking function 
for most tasks ($23$ out of $26$) but can only automatically validate $7$ of them, whereas our static analysis technique LPRF is able to automatically prove $17$.
This observation demonstrates that verifying a
given ranking function modulo nonlinear integer arithmetic
is not only difficult in theory but remains challenging for 
modern arithmetic theory solvers. This provides additional motivation for the introduction of
the weak arithmetic theory $\ThZ$ in this work.
% The results on the selected subset of linear benchmarks
% and the nonlinear benchmarks are presented in
% Table~\ref{tab:eval-dynamite}. 
% For the nonlinear benchmarks, DynamiTe can learn the ranking function 
% from data for most tasks but can only automatically validate $7$ of them. 
% This observation proves that even verifying a
% given ranking function modulo nonlinear integer arithmetic
% can be difficult.
% In terms of running time on the \texttt{termination-crafted-lit} suite, the
% original evaluation in \cite{leDynamiTeDynamicTermination2020} proves that
% DynamiTe often requires an order of magnitude more time than Ultimate Automizer,
% while LPRF with or without phase runs much faster than Ultimate Automizer
% (\autoref{tab:eval}). For the running time on the \texttt{termination-nla}
% suite, LPRF with phase takes about 2 minutes for all tasks combined on
% consumer-level hardware, which is orders of magnitudes less than the running
% time of DynamiTe as reported by Fig. 10 in
% \cite{leDynamiTeDynamicTermination2020} that was obtained with more computing
% resources. This difference in running time is expected due to the dynamic and
% nondeterministic nature of DynamiTe, yet it could also demonstrate the
% practicality of our technique.

% \zak{\textbf{Are there any tools that we can compare against that can synthesize polynomial RFs?}}

% section Evaluation (end)

\section{Related Work} % (fold)
\label{sec:related_work}

\paragraph*{Ranking function synthesis}
For linear loops, there are \emph{complete} procedures for synthesizing
particular classes of ranking functions such as linear
\cite{podelskiCompleteMethodSynthesis2004,ben-amramRankingFunctionsLinearConstraint2014},
lexicographic linear
\cite{bradleyLinearRankingReachability2005,ben-amramRankingFunctionsLinearConstraint2014},
multi-phase \cite{ben-amramMultiphaseLinearRankingFunctions2019}, and nested
\cite{leikeRankingTemplatesLinear2014}. For nonlinear loops, it is usually
necessary to start with a template, e.g., polyranking functions based on
a finite tree of differences between terms
\cite{bradleyPolyrankingPrinciple2005a}, or limiting the degree of the
polynomial ranking functions to be considered
\cite{carbonneauxCompositionalCertifiedResource2015,leDynamiTeDynamicTermination2020}.
Other procedures for synthesizing (bounded-degree) polynomial ranking functions
rely on semidefinite programming \cite{cousotProvingProgramInvariance2005} and cylindrical
algebraic decomposition \cite{chenDiscoveringNonlinearRanking2007}, but we have not found
implementations for these techniques to compare with experimentally.
Chatterjee et al. \cite{chatterjeeTerminationAnalysisProbabilistic2016} synthesizes polynomial
ranking supermartingales for probabilistic programs through Positivestellensatz,
which bears some resemblance to our approach based on $\ThZ$ consequence
finding. One key advantage of our work comparing to previous work is the
completeness and monotonicity guarantee. 

\paragraph*{Decision procedures for termination}
The decision problem for termination of linear loops was introduced by Tiwari
\cite{tiwariTerminationLinearPrograms2004a}. General procedures for loops over
the reals was developed by Tiwari \cite{tiwariTerminationLinearPrograms2004a},
over the rationals by Braverman \cite{bravermanTerminationIntegerLinear2006},
and over the integers by Hosseini et al.
\cite{hosseiniTerminationLinearLoops2019b}. Time complexity for linear and
lexicographic linear ranking function synthesis has also been studied
\cite{ben-amramRankingFunctionsLinearConstraint2014}. For nonlinear loops, it
has been shown that termination of certain restricted classes of single-path
polynomial loops over the reals are decidable, e.g., when the guard is compact
and connected \cite{liTerminationSinglePathPolynomial2016a}, when the loop is
triangular weakly nonlinear \cite{harkTerminationTriangularPolynomial2022}, when
the guard is compact semi-algebraic and the body contains continuous
semi-algebraic updates \cite{neumannRankingFunctionSynthesis2020}. Additionally,
Neumann et al. \cite{neumannRankingFunctionSynthesis2020} presents a non-constructive method 
for reasoning about termination via polynomial ranking functions of unbounded degree. 
The authors have not found any work that handles polynomial loops over
\emph{integers} without assuming real relaxations.

% section Related Work (end)

\subsubsection*{Acknowledgements.}
This work was supported in part by the NSF under grant
number 1942537. Opinions, findings,
conclusions, or recommendations expressed herein are those of the authors and
do not necessarily reflect the views of the sponsoring agencies.

%
% ---- Bibliography ----
%
% BibTeX users should specify bibliography style 'splncs04'.
% References will then be sorted and formatted in the correct style.
%
\bibliographystyle{splncs04}
\bibliography{references.bib}

\end{document}